\documentclass[journal]{IEEEtran}
\usepackage{cite}
\usepackage{mciteplus}
\usepackage{citesort}
\usepackage{amsthm}
\theoremstyle{plain} 
\usepackage{tabularx}
\usepackage[flushleft]{threeparttable}
\usepackage{multirow}
\usepackage{booktabs}
\usepackage{xcolor}
\usepackage{graphicx}

\usepackage{epsfig}
\usepackage{epstopdf}
\usepackage{psfrag}
\usepackage{subfigure}

\usepackage{url}

\usepackage{stfloats}
\usepackage{amsmath}
\usepackage{amsfonts}
\usepackage{amssymb}
\usepackage{eqnarray}
\usepackage{subeqnarray}
\usepackage{flushend}

\usepackage{array}

\usepackage{setspace}
\usepackage{paralist}

\usepackage{algorithm}
\usepackage{algpseudocode}
\usepackage{graphicx}    
\usepackage{caption}     
\usepackage{subcaption}  

\usepackage{acronym}

\acrodef{sm}[SM]{spatial modulation}
\acrodef{ssk}[SSK]{space shift keying}
\acrodef{smx}[SMX]{spatial multiplexing}
\acrodef{ml}[ML]{maximum--likelihood}
\acrodef{sd}[SD]{Sphere Decoder}
\acrodef{aber}[ABER]{average bit error ratio}
\acrodef{smrx}[SM--Rx]{Receiver--centric SD}
\acrodef{smtx}[SM--Tx]{Transmit--centric SD}
\acrodef{mimo}[MIMO]{multiple--input multiple--output}
\acrodef{simo}[SIMO]{single--input multiple--output}
\acrodef{cdf}[CDF]{cumulative distribution function}
\acrodef{bpsk}[BPSK]{binary phase shift keying}
\acrodef{qpsk}[QPSK]{quadrature phase shift keying}
\acrodef{qam}[QAM]{quadrature amplitude modulation}
\acrodef{snr}[SNR]{signal-to-noise-ratio}
\acrodef{iid}[i.i.d.]{identical and independently distributed}
\acrodef{ici}[ICI]{inter--channel interference}
\acrodef{rf}[RF]{radio frequency}
\acrodef{nlos}[NLOS]{Non Line--of--Sight}
\acrodef{los}[LOS]{line of sight}
\acrodef{scr}[SC]{spatial correlation}
\acrodef{pep}[PEP]{pairwise error probability}
\acrodef{mgf}[MGF]{moment-generation function}
\acrodef{fbesm}[FBE--SM]{fractional bit encoding spatial modulation}
\acrodef{gsm}[GSM]{generalized spatial modulation}
\acrodef{csi}[CSI]{channel state information}
\acrodef{pdf}[PDF]{probability distribution function}
\acrodef{mrc}[MRC]{maximum ratio combining}
\acrodef{rv}[RV]{random variable}
\acrodef{awgn}[AWGN]{additive white Gaussian noise}
\acrodef{qsm}[QSM]{quadrature spatial modulation} 
\acrodef{mac}[MAC]{multiple access channel}
\acrodef{zmcgv}[ZMCGV]{zero--mean complex Gaussian vector}
\acrodef{dof}[DoF]{degree of freedom}
\acrodef{eda}[OSA]{optimally spaced antennas}
\acrodef{rda}[RSA]{randomly spaced antennas}
\acrodef{aoa}[AOA]{angle of arrival}
\acrodef{aod}[AOD]{angle of departure}
\acrodef{itu}[ITU]{International Telecommunication Union}
\acrodef{dsm}[DSM]{differential spatial modulation}
\acrodef{smt}[SMT]{space modulation techniques}
\acrodef{dco}[DCO]{Asymmetrically clipped DC biased optical}
\acrodef{aco}[ACO]{Asymmetrically clipped optical}
\acrodef{ofdm}[OFDM]{orthogonal frequency division multiplexing}
\acrodef{papr}[PAPR]{Peak average power ratio}
\acrodef{mamimo}[MaMIMO]{massive multiple input multiple output}
\acrodef{ofdm}[OFDM]{orthogonal frequency division}
\acrodef{das}[DAS]{distributed antenna systems}
\acrodef{thz}[THz]{Terahertz}
\acrodef{aps}[APs]{access points}
\acrodef{ues}[UEs]{user equipment's}
\acrodef{6}[6G]{sixth generation}
\acrodef{uav}[UAV]{unmanned aerial vehicle }

\newtheorem{theorem}{Theorem}
\newtheorem{corollary}{Corollary}

\begin{document}



\title{ Symbiotic FAS Strategies for 6G UAVs Assisted Backscatter Networks}


\author{\IEEEauthorblockN{Nagla Abuzgaia\IEEEauthorrefmark{5}\IEEEauthorrefmark{7}, Abdelhamid Salem\IEEEauthorrefmark{5}\IEEEauthorrefmark{4},\textit{ Member, IEEE}, Ahmed Elbarsha\IEEEauthorrefmark{5}, and Khaled Rabie\IEEEauthorrefmark{3},\textit{ Senior Member, IEEE}. }
\thanks{\IEEEauthorrefmark{5} Authors are with University of Benghazi, Electrical and Electronics Engineering Department, Faculty of Engineering, Benghazi, Libya, E-mails: \{{nagla.abuzgaia , abdelhamid.albaraesi \& ahmed.elbarsha\}@uob.edu.ly}}	
\thanks{\IEEEauthorrefmark{7}Nagla Abuzgaia is also with Libyan Authority for Scientific Research, Tripoli, Libya, E-mail: {nagla.abuzgaia@aonsrt.ly}\\ \IEEEauthorrefmark{4} Abdelhamid Salem is also with Department of Electronic and Electrical Engineering, University College London, UK, E-mail: {a.salem@ucl.ac.uk}}
\thanks{\IEEEauthorrefmark{3} Khaled Rabie is with the Department of Computer Engineering, King Fahd University of Petroleum and Minerals (KFUPM), Dhahran, Saudi Arabia, Email: k.rabie@kfupm.edu.sa.}
}

\maketitle

\begin{abstract}
This paper investigates a fluid antenna system (FAS) enabled symbiotic radio (SR) network featuring a UAV mounted FAS communicating through a wireless power transfer (WPT) signal with a remote cluster head and an ambient tag. To evaluate system reliability, we derive the upper (UB) and lower (LB) bounds for both the composite and backscatter (BcS) outage probabilities, formulate the coexistence outage probability (COP), and present an asymptotic analysis that explicitly characterizes the system's spatial diversity and coding gains. We propose three novel symbiotic strategies; maximum backscatter selection (MBS), joint balanced selection (JBS) and threshold aware priority selection (TAPS) and compare them with the conventional maximum composite gain selection (MCGS) and random selection methods. Under a joint optimization framework, the macroscopic UAV 2D spatial placement and microscopic realization-level port selection were formulated.  Since the joint outage Pareto frontier is highly non-convex, the computationally expensive $\epsilon$-constraint method identified the optimal knee-point. While the symbiotic novel TAPS strategy demonstrated identical performance in a single step with linear $\mathcal{O}(MN)$ complexity. Moreover, an asymptotic closed-form mobility analysis under Jakes' fading model proves that the symbiotic FAS tracking protocol consumes under $10\%$ of the channel coherence time. Simulation results and COP heat maps validate the optimal coexistence performance at the ideal UAV coordinates with $60\%$ reduction in the required transmit power.

\end{abstract}

\begin{IEEEkeywords}
Fluid antenna system (FAS), UAVs, symbiotic radio (SR), backscatter (BcS), COP, Pareto front.  
\end{IEEEkeywords}

\IEEEpeerreviewmaketitle

\section{Introduction}

\IEEEPARstart{U}{ncrewed} aerial vehicles (UAVs) offer on-demand deployment and flexible three-dimensional connectivity, making them attractive enablers for sparse, remote, and infrastructure-poor networks \cite{Granelli2026}. This capability is particularly transformative for rural hard to reach area that needs wireless coverage or/and wireless powered communication networks (WPCNs) services, where  UAVs can serve as an aerial base station/mobile power beacon, wirelessly charging remote internet of things (IoTs) devices that subsequently transmit their data back to the UAVs or to a dedicated receive hub \cite{Xie2021}. This poses UAVs communication as a vital corner stone in sixth generation (6G) networks toward ubiquitous connectivity. 
However, UAVs suffers from limited onboard energy resources and capabilities to install heavy computations hardwares or antennas, known as size weight and power (SWAPs) constraints. This hinder UAVs from mounting large heavy  multiple input multiple output (MIMO) fixed position antennas (FPAs) structures, limiting the degree of freedom (DoF) provided through the diversity and multiplexing gains of MIMOs different schemes.   \\

To counter FPAs limitations, novel fluid antenna system (FAS) technology was proposed in \cite{Wong2020}. FAS emerged as a technology that leverage the flexibility of port selection within a compact spatial distance to harness the diversity gain \cite{Psomas2023}, beam forming gain \cite{Xu2026beamforming}, and more recently the pixel antenna coding gain \cite{shen2026coding}. Because of these gains and FAS convenient characteristics under UAVs SWAPs constrains, authors has been investigation the viability of implementing FAS on-board of UAVs\cite{Abuzgaia2026FAS}\cite{Abuzgaia2026localization}. In \cite{Abuzgaia2026FAS}, FAS proved to provide energy efficient margin for UAVs when investigated in WPCNs. While in \cite{Abuzgaia2026localization}, authors demonstrated the capabilities of FAS geometric gain to enhance the UAVs self localization capabilities in denies global positioning system (GPS) urban corridors. Another important aspect in FAS research is developing practical protocols with convenient FAS ports density that can handle the construction of FAS channel state information(CSI)\cite{Elganimi2026}, and port selection within permissible latency\cite{Zhu2026} and within the channel coherence symbol block\cite{Mu2026LoRaFAS}. In \cite{Elganimi2026} authors investigated the slow FAMA channel construction when FAS is implemented at the user side. In \cite{Zhu2026} authors investigated the user side FAS port switching delay effect on the performance of hyper-reliable low-latency communications (HRLLC) under finite blocklength operation. In\cite{Mu2026LoRaFAS} a long-range communication (LoRa)-FAS protocol for IoT networks is proposed which embed pilot sequences directly into symbols instead of using separate preambles, reducing overhead and physical-layer frame latency with the condition of a long channel coherence time. However, these protocols discuss the scenarios of receiver mounted FAS. Moreover, current research lacks studies on FAS mounted UAVs specific mobility spatial effects on protocols feasibility\cite{Abuzgaia2026FAS}. \\

In parallel,  symbiotic radio (SR) has emerged as a promising energy efficient paradigm for mutualism spectrum sharing, in which passive backscatter (BcS) devices reuse the primary radio frequency (RF) waveform to communicate with little or no additional spectrum or power\cite{Liang2020}. Therefore, merging SR paradigm into UAVs communications structure has been investigated recently toward better coverage \cite{Yang2025} and spectrum use\cite{Xu2026IntSenSym}, especially for remote or hard-to-reach areas. A recent paper explicitly frame the goal as improving network coverage while increasing spectrum utilization \cite{Yang2025}. Another paper on UAVs swarm utilizes symbiotic communications technique for spectrum, energy efficient communications and sensing within the swarm, where a leading UAV acts as a primary transmitter to the ground base station while the other UAVs in the swarm use this signal to backscatter their data \cite{Xu2026IntSenSym}. However, UAV-assisted symbiotic links are still limited by strong direct-link interference, difficult channel estimation, mobility-induced non stationariness, and the need to jointly satisfy backscatter throughput, harvested-energy, and QoS constraints\cite{Zhong2024,Janjua2024,Zhang2025}.\\

Nevertheless, SR performance optimization problems are characterized with multi-objective non-convex formulations that needs iterative solvers with high computations complexity. The SR rate performance optimization for the primary and secondary links were performed through the beam forming gain optimization of the movable antenna \cite{Zhou2024}. While the SR energy efficiency Pareto-boundry was characterized using the weighted linear sum algorithm and the optimization of the transmit beamforming and power allocation were performed by the SCA algorithm \cite{Wang2022}. More recently, a Pareto-boundry for the capacity region of FAS assisted SR (FAS-SR) was formulated as a standard linear weighted sum problem and optimized with a chaotic sequence-based adaptive particle swarm optimization (CSA-PSO)\cite{Li2026FASR}.  However, the capacity region Pareto-boundry of  FAS-SR under correlated and discrete ports is a non-convex boundary and linear scalarization completely fails to find solutions in non-convex regions of a Pareto front as the linear sweep will bridge across concave entirely missing the actual optimal ports. \\ 

Motivated by these challenges, this paper investigates a novel FAS-SR network architecture. The proposed system model features an UAV equipped with an $N$-port FAS that dynamically communicates with an ambient IoT tag. At the receiving terminal, a remote cluster head (CH) handles the processing workload by executing both the composite signal energy harvesting (EH) assessments and the BcS data stream decoding. To guarantee network reliability, we construct a joint multi-objective optimization framework that bridges two distinct operations: the 2D spatial placement of the UAV platform and the symbiotic FAS port selection policy. This unified structural framework has successfully overcome the non-convex physical trade offs of the symbiotic channel. To this end, the core contributions of this paper are:\\

1) Investigation for the first time FAS heuristic port selection capabilities in fulfilling the symbiotic trade-off optimum performance under WPCNs scenario. As novel symbiotic FAS selection strategies are proposed to tackle the reliability performance in the downlink (DL) of both the composite received power for the energy harvesting (EH) at the CH and the BcS link for IoTs data detection. These symbiotic FAS strategies achieve a near-perfect multi-objective compromise, minimizing BcS data outage while strictly defending the EH boundary without requiring highly complex, real-time iterative weight-tuning of the multi-objective solvers. In fact, these symbiotic FAS strategies are completely transferable to other SR scenarios. \\   

2) In the protocol complexity and mobility feasibility analysis subsection, we provide a rigorous, closed-form mathematical verification proving the real-time physical viability of executing multi-port FAS tracking from a moving UAV platform when data decoding occurs at a remote cluster head (CH). By deriving a generalized time-budget boundary condition under Jakes' isotropic scattering model, we analytically prove that the total signaling, computation, and feedback overhead consumes less than $10\%$ of the channel's coherence envelope. Furthermore, our asymptotic scaling analysis establishes a clear engineering design law: the spatial resolution gains achieved by expanding FAS ports $N$ introduce a manageable $\mathcal{O}(1/N)$ trade-off with the maximum permissible flight velocity of the aerial node.\\

2) Novel symbiotic multi objective FAS port selection strategies is proposed and their global outage probability lower (LB) and upper bounds (UB) performance characterized under Nakagami-m fading channels with coherent/SIC detector for BcS data. Moreover, asymptotic bounds outage probability are characterized for the composite and BcS links, where the diversity gain order and coding gain of FAS are identified theoretically. Furthermore, coexistence outage probability (COP) bounds were derived and FAS mounted UAV placement optimization for optimum COP performance. \\

4) We characterize the fundamental multi-objective trade-offs governing the symbiotic network by mapping the joint reliability region of the primary composite link and the dyadic backscatter link. We demonstrate that conventional linear weighted-sum scalarization fails to capture the structurally non-convex boundaries of FAS performance. To resolve this, we leverage a weighted Tchebycheff scalarization framework alongside an $\epsilon$-constraint baseline to establish the exact Pareto-front boundary. To enable real-time execution, we propose threshold aware priority selection (TAPS) strategy, which acts as a single step realization level filter. TAPS accelerates directly to the optimal knee-point of the Pareto frontier, achieving the performance of an optimized $\epsilon$-constraint framework without requiring computationally expensive iterative loops.\\

The rest of the paper is organized as follows; Section.II presents the system model with FAS mounted UAV channel and symbiotic signals ratio, Section.III proposes novel symbiotic FAS strategies with complexity and feasibility analysis, Section.IV provides the outage bounds derivations,  Section.V presents multi-objective framework through Pareto front boundary formulation and UAV placement optimization, and Section.VI presents the simulation results, to end up with the conclusion in Section.VII.

\section{System model}
We consider an UAV hovering in a time slot $T$, representing an aerial base station at the position coordinate $\mathbf{p_{\text{UAV}}} = (\text{X}_{\text{u}}, \text{Y}_{\text{u}},\text{h}_{\text{u}})^T \in \mathbb{R}^{3 \times 1}$ equipped with $N$-ports fluid antenna that transmits energy during $\alpha T$ to a designated ground node, which acts as a data aggregator for a cluster of sensors \cite{Wei2022}, called cluster head (CH) and positioned at $\mathbf{p_{CH}}=(\text{X}_{\text{CH}},\text{Y}_{\text{CH}},\text{ h}_{\text{CH}})^T \in \mathbb{R}^{3 \times 1}$, the distance between UAV and CH is the direct link $\text{d}_{\text{D}}=\|\mathbf{p}_{\text{CH}} - \mathbf{p}_{\text{UAV}}\|$. Similarly, the A-IoT tag is located at $\mathbf{p_{\text{Tag}}}=(\text{X}_{\text{t}},\text{Y}_{\text{t}},\text{h}_{\text{t}})^T \in \mathbb{R}^{3 \times 1}$, the distance between the UAV and the A-IoT tag is the BcS forward link $\text{d}_{\text{f}}=\|\mathbf{p}_{\text{Tag}} - \mathbf{p}_{\text{UAV}}\|$, while the BcS backward distance to the CH is $\text{d}_{\text{g}}=\|\mathbf{p}_{\text{CH}} - \mathbf{p}_{\text{Tag}}\|$. The CH node employs a hybrid energy harvesting model, where the solar energy provides the base power for the SIC processing engine, while the UAV-mounted FAS optimizes the RF link to maximize backscatter signal strength and provide supplemental WPT, ensuring energy-neutral operation even in fluctuating environmental conditions. The collected tag's BcS data are transmitted back to the UAV during the remaining of the time slot i.e. $(1-\alpha)T$, as was studied by authors in recent study \cite{Abuzgaia2026FAS}, and demonstrated here in Fig.\ref{fig:FA_UAV_WPT}. Nevertheless, we devote this work solely for DL performance as it represents the bottle neck of this symbiotic system. 

\begin{figure}[h!]
 \includegraphics[width=7cm]{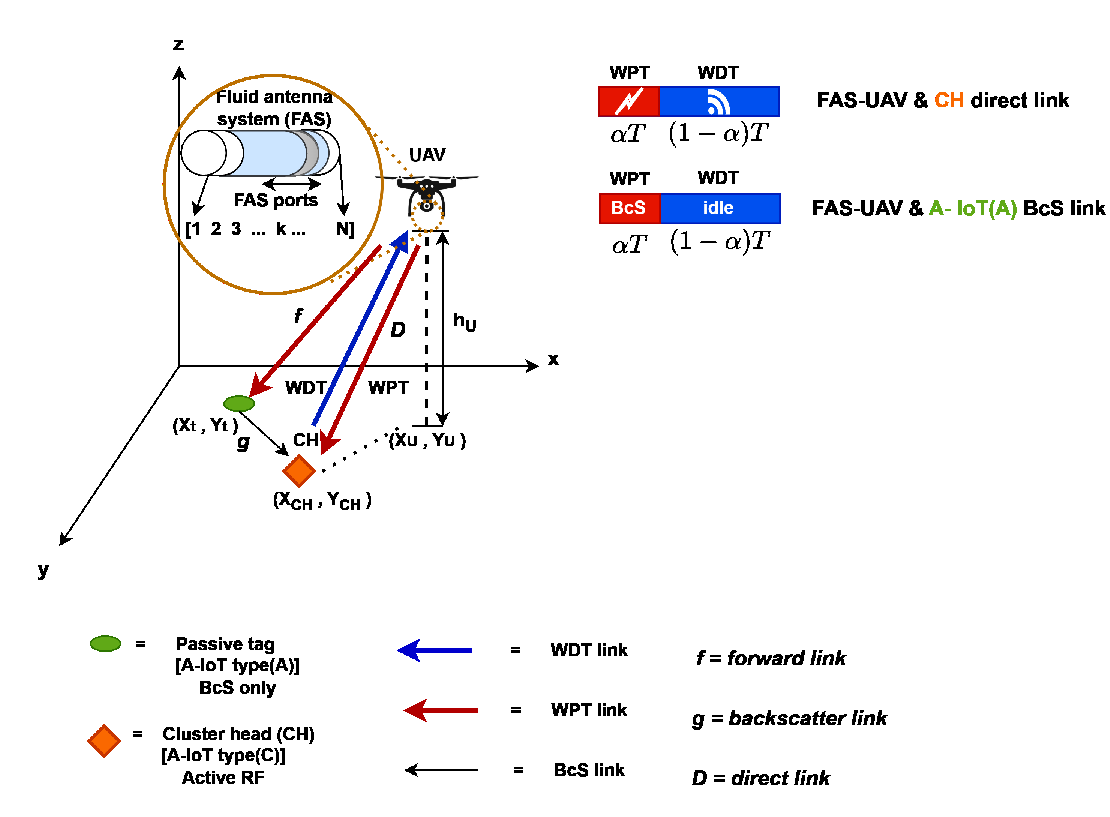}\vspace{-0.2cm}
 \caption{System model of a FAS mounted UAV assisted symbiotic WPCNs.}
 \label{fig:FA_UAV_WPT}
\end{figure}

\subsection{FAS-UAV channel model}
For analytical tractability, we assume the channel model as time division (TD). Due to the small size of the spacing between the N ports of the fluid antenna compared with the distance between the UAV and the CH we could approximate the $ k^{th}$ port pathloss of  the direct DL and the backscattered forward link as the same for all the ports, i.e. $L_k(d) \approx L(d) $, hence the DL-WPT channel model and the BcS forward link channel model, respectively, are \cite{Liu2021}
\begin{equation}
\sqrt{L_i(d_i)} h_{i,\bar{k}} =\sqrt{P_u (d_i^{-\rho})}h_{i,\bar{k}}, \, \, i \in \{D,f\}, 
\end{equation}
where $P_u$ is the channel power gain at the reference distance, $\rho$ is the path loss exponent, while $D$ and $f$ are the FAS-UAV to CH's WPT direct link and FAS-UAV to  A-IoT's BcS forward link, respectively, with $\bar{k}$ as the activated port for both links. The FAS ports fading channels are assumed to follow Nakagami-m distribution with spatial correlation as follows \cite{Mavrovoltsos2024}:
\begin{equation}
\begin{split}
H_{kl} = \sqrt{1 - \mu^2_k} x_{kl} + \mu_k x_{0l} + j \left( \sqrt{1 - \mu^2_k} y_{kl} + \mu_k y_{0l} \right),\\
  l = \{1, \dots, m\},
 \end{split}
\end{equation}
where  $\{x_{kl}$, $ y_{kl}\}$ are independent Gaussian r.v.s with zero mean and 1/2 variance, $\{x_{ol}$, $ y_{ol}\}$ is the reference port, and $\mu_k$ is the correlation coefficient which follows \cite{Wong2022}: 
\begin{equation}
\mu^2 = \frac{2}{N(N - 1)} \sum_{k=1}^{N-1} (N - k) J_0 \left( \frac{2\pi k W}{N - 1} \right), \quad \text{for } \mu_k = \mu \forall k,
\label{eq:mu_squared}
\end{equation}
where $J_0( )$ is the zero-order Bessel function of the first kind. Then the direct WPT and forward BcS links correlated $k^{th}$ port Nakagami-m fading channels envelops with normalized variance are \cite{Mavrovoltsos2024}, 
\begin{equation}
|h_i,k| = \sqrt{\sum_{l=1}^{m} \frac{1}{m} |H_{i,kl}|^2}, \, \, i \in \{D,f\}.
\end{equation} 
On the other hand, $|h_g|$ is the A-IoT to CH BcS link magnitude with distance $d_g$, and defined as a Nakagami-$m$ single variable that's common to all ports. 

\subsection{Symbiotic signals \& SNR}
The proposed symbiotic system in this paper consists of two main links; the direct link which represents the FAS-UAV's WPT unmodulated carrier signal used for the energy charging of the CH, along side the backscatter (BcS) link in which the A-IoT tag uses the direct link signal to backscatter their data to the CH \footnote{The system model could be extended into a direct link signal that carries both the WPT signal and communication data from the UAV, i.e. SWIPT.}.
To first estimate the direct link signal at the CH, the A-IoT tags are kept inactive during the pilot symbol. Hence, the received direct link signal is: 
\begin{equation}
y^{CH} _{D,k}= \sqrt{P_{tx}}\cdot L_D\, h_{D,k} \, s+ n,
\end{equation}
where $P_{tx}$ is the transmitted power of the UAV, $s$ is the normalized carrier symbol with zero mean and unit variance, and ${n}$ is the additive white Gaussian noise (AWGN) with zero mean and $(N_o)$ variance.\\

By activating the A-IoT tag, the WPT carrier signal is captured by the A-IoT tag and modulated with BcS data $c$. Thus, the composite received signal at the CH is \cite{Ghadi2025}: 
\begin{equation}
y^{CH} _{C,k}= \sqrt{P_{tx}}\cdot L_D\, h_{D,k} \, s + \vartheta \sqrt{P_u}L_f L_g \, h_{f,k}h_g\, c \, s + n,
\label{eqn:y_c}
\end{equation}
where $\vartheta$ is the BcS reflection efficiency. From (\ref{eqn:y_c}), the equivalent composite signal channel is: 
\begin{equation}
|h_{C,k}| =  |L_D \, h_{D,k}+ \vartheta L_f L_g \, h_{f,k}h_g\, c|,
\end{equation}
where the BcS link signal channel is: 
\begin{equation}
|h_{BcS,k}| = |\vartheta L_f L_g \, h_{f,k}h_g\, c|.
\end{equation} 

Therefore, the normalized composite signal received power is\footnote{For analytical convenience and to provide a unified scale with the data detection threshold, we normalize the received power by the thermal noise floor $N_0$.} 
\begin{equation}
\gamma_C = \bar{\gamma}_C |L_D \, h_{D,k}+ \vartheta L_f L_g \, h_{f,k}h_g\, c|^2,
\label{eqn:avg_c}
\end{equation}
where $\bar{\gamma}_C = \frac{P_{tx}}{N_o}$ is the transmitted composite power normalized to the thermal noise, while the normalized WPT direct link signal power is: 
\begin{equation}
\gamma_D=\bar{\gamma}_D |h_{D,k}|^2,
\label{eqn:avg_d}
\end{equation}
where $\bar{\gamma}_D=\frac{ P_{tx} \cdot L^2_D }{N_o}$ is the normalized average power of the WPT direct link signal.
On the other hand, the BcS link SNR: 
\begin{equation}
\gamma_{BcS}= \bar{\gamma}_{BcS}|h_{f,k}h_g|^2,
\label{eqn:avg_BcS}
\end{equation}
where $\bar{\gamma}_{BcS}=\frac{P_{tx} \cdot \vartheta^2 L^2_f L^2_g\,  \, c^2}{N_o}$ is the BcS link average SNR. 

\section{FAS port selection strategies}
\label{sec:FAS_symb} 
To harness the spatial diversity gain of FAS and utilize the novel DoF provided by the N-ports, the selection process is based on finding the port that provide the maximum channel gain within the available N ports. If the target was directly maximizing the received signal, either a single channel link as was demonstrated in our previous work \cite{Abuzgaia2026FAS}, or a composite signal as in \cite{Ghadi2025}\cite{Li2026FASR} then the selection strategy is straight forward as clarified next. 

\subsection{Conventional FAS strategies}
 
\subsubsection{Maximum composite gain selection (MCGS)}
In this strategy the selected port will be based on the port with the maximum composite signal energy at the CH, as follows: 
 \begin{equation}
\bar{k}_{MCGS}= \arg \max_{k\in \mathcal{N}} (|h_{C,k^*}|^2),
\end{equation}
which has been investigated before in recent literatures where the BcS link was an exterior support link for the main wireless data direct link between the base station and the user device, i.e. no BcS data was transferred in these literatures\cite{Ghadi2025}\cite{Li2026FASR}.\\
\begin{equation}
\bar{k}_{MCGS}=  \arg \max_{k\in \mathcal{N}}(|\sqrt{\bar{\gamma}_D} h_D +\sqrt{\bar{\gamma}_{BcS}} h_{BcS}|^2). 
\label{eqn:MCGS}  
\end{equation}
By expanding (\ref{eqn:MCGS})\cite{Radic2025}\cite{Shen2016},
 
\begin{equation}
\begin{split}
\bar{k}_{MCGS}=  \arg \max_{k\in \mathcal{N}} \Big(\bar{\gamma}_D|h_{D,k}|^2 + \bar{\gamma}_{BcS} |h_{f,k} h_g|^2 \\ + 2\sqrt{\bar{\gamma}_D \bar{\gamma}_{BcS} }|h_{D,k}| |h_{f,k} h_g| \cos(\Delta \phi)\Big),
\end{split}
\label{eqn:MCGS_phi}
\end{equation}
where $\Delta \phi $ is the phase offset between the direct and BcS signal, which is a uniform random phase $\Delta\phi \sim \mathcal{U}[0, 2\pi]$. \\

 Moreover, maximizing the summation composite signal at the CH generate the maximum possible diversity gain of FAS  \cite{Ghadi2025}, but doesn't guarantee the backscatter individual link channel gain performance which might be in deep fade while the maximum gain is sole due to the direct link. Hence, novel port selection methods are proposed for this symbiotic backscatter link to ensure that the backscatter link is detectable at the CH and not under deep feed while the direct link high gain dominate the maximized gain selection.\\
 
\subsubsection{Random Selection (RS)} 
This port selection strategy will be used as a benchmark, where the activated port will be selected randomly without any criteria to present the performance of the fixed antenna systems\cite{Mavrovoltsos2024}.
\subsection{Novel Symbiotic FAS strategies}
 Contrary to the single link channel, the symbiotic system in this paper contains a composite signal from two links; the direct WPT carrier signal and the backscattered tag signal. Hence, we propose in this section three different port selection strategies based on the system symbiotic priority. \subsubsection{Maximum backscatter selection (MBS)}
This strategy focusses on optimizing the BcS link within the composite signal using the received composite and direct signals gain as follows:   
\begin{equation}
\bar{k}_{MBS}= \arg \max_{k\in \mathcal{N}} ( \bar{\gamma}_C \,|h_{C,k^*}|^2-\bar{\gamma}_D \, |h_{D,k^*}|^2).
 \label{eqn:MBS} 
\end{equation}
For performance evaluation purposes, a maximum backscatter selection lower bound ($\text{MBS}_{\text{LB}}$) strategy is proposed as follows: 
 \begin{equation}
\bar{k}_{MBS_{LB}}= \arg \max_{k\in \mathcal{N}} (|h_{BcS,k^*}|^2),
\label{eqn:MBS_LB} 
\end{equation}
substituting with (\ref{eqn:avg_BcS}) yields 
\begin{equation}
\bar{k}_{MBS_{LB}}= \bar{\gamma}_{BcS}|h_g|^2  \arg \max_{k\in \mathcal{N}}(|h_{f,k^*}|^2).
\label{eqn:MBS_LB_forwrd}
\end{equation}

\subsubsection{Joint balanced selection (JBS)}
Representing a threshold blind joint optimization scheme, JBS balances the direct and backscatter links simultaneously without requiring explicit knowledge of the decoding thresholds. It achieves this by activating the FAS port that maximizes the product of the individual link powers as follows:
\begin{equation}
\bar{k}_{\text{JBS}} = \arg \max_{k \in \mathcal{N}} \left( \bar{\gamma}_D |h_{D,k}|^2 \cdot \bar{\gamma}_{\text{BcS}} |h_{\text{BcS},k}|^2 \right).
\label{eqn:JBS}
\end{equation}
While JBS avoids a singular symbiotic task optimization as in MCGS and MBS, its threshold blind nature can lead to activating a port that exceeds the BcS reliability threshold over accessible composite signal performance enhancement, leading to non optimal symbiotic solution.\\

\subsubsection{Threshold aware priority selection (TAPS)}

Let $\mathcal{K}_{\text{valid}}$ denote the dynamic subset of FAS ports that successfully satisfy the minimum BcS reliability constraint during a given channel instantiation, defined as:
\begin{equation}
\mathcal{K}_{\text{valid}} = \left\{ k \in \mathcal{N}, \;\big|\; \bar{\gamma}_{\text{BcS}} |h_{\text{BcS},k}|^2 \ge \gamma_{\text{min}} \right\}.
\end{equation}

The TAPS port selection policy operates on a conditional priority rule. If at least one port satisfies the backscatter threshold requirement ($\mathcal{K}_{\text{valid}} \neq \emptyset$), the strategy allocates the remaining spatial degree-of-freedom to maximize the primary composite link. Conversely, if the fading environment is highly severe and no single port satisfies the target threshold ($\mathcal{K}_{\text{valid}} = \emptyset$), the algorithm switches to a fallback security mode that maximizes the BcS gain to minimize the outage deficit. The selected port $\bar{k}_{\text{TAPS}}$ is expressed as:
\begin{equation}
\bar{k}_{\text{TAPS}} = \begin{cases} 
\arg \max_{k \in \mathcal{K}_{\text{valid}}} \left( \bar{\gamma}_C |h_{C,k}|^2 \right), & \text{if } \mathcal{K}_{\text{valid}} \neq \emptyset, \\ 
\arg \max_{k \in \mathcal{N}} \left( \bar{\gamma}_{\text{BcS}} |h_{\text{BcS},k}|^2 \right), & \text{if } \mathcal{K}_{\text{valid}} = \emptyset.
\end{cases}
\end{equation}

For real time embedded implementation on a UAV platform, this conditional logic can be mapped into a vectorized, penalty based objective function requiring zero iterative loops, formulated in (\ref{eqn:TAPS}) below, where $L \gg 1$ represents a large positive regularizing scaling bias, and $\mathbb{I}_k \in \{0,1\}$ is the indicator variable checking the threshold state of port $k$, defined by:
\begin{equation}
\mathbb{I}_k = \frac{1}{2} \left( 1 + \text{sgn}\left( \bar{\gamma}_{\text{BcS}} |h_{\text{BcS},k}|^2 - \gamma_{\text{min}} \right) \right).
\end{equation} 
 
\begin{figure*}[h]
\begin{equation}
\bar{k}_{\text{TAPS}} = \arg \max_{k \in \mathcal{N}} \left[ \left( \bar{\gamma}_C |h_{C,k}|^2 + L \right) \cdot \mathbb{I}_k + \left( \bar{\gamma}_{\text{BcS}} |h_{\text{BcS},k}|^2 \right) \cdot (1 - \mathbb{I}_k) \right],
\label{eqn:TAPS}
\end{equation}
\hrulefill 
\end{figure*}

\subsection{Implementation complexity and feasibility analysis}
Since the proposed symbiotic FAS policies require sequential spatial port evaluation across FAS dimension $N$, establishing its physical viability within the UAV time varying wireless channels is critical. In this subsection, we derive the generalized boundary conditions under which the signalling overhead of the symbiotic tracking framework remains strictly upper bounded by the channel's coherence envelope.
\subsubsection{Symbiotic protocol phases}
Let $T_s$ denote the fundamental symbol duration of the system. The complete port selection and feedback sequence is executed over a structured frame consisting of three distinct phases:
\begin{enumerate}
    \item \textbf{Spatial piloting phase:} The UAV sequentially traverses across all $N$ ports. At each port $k \in \mathcal{N}$, a pilot frame consisting of $N_D$ symbols is transmitted to estimate the direct channel state at the CH, followed by $N_B$ symbols dedicated to backscatter link illumination from the tag and composite channel estimation at the CH. Accounting for the transient switching duration $T_{\text{sw}}$ of the RF micro-electromechanical systems (MEMS) or solid-state switch matrix per port transition, the cumulative estimation delay $\tau_{\text{est}}$ is analytically formulated as:
    \begin{equation}
    \tau_{\text{est}} = N \left[ (N_D + N_B) T_s + T_{\text{sw}} \right].
    \end{equation}
    
    \item \textbf{CH computing phase:} Upon capturing the realization-level gains, the CH processes the intended symbiotic FAS objective function of (MCGS, MBS, JBS, or TAPS) defined in (\ref{eqn:MCGS_phi},\ref{eqn:MBS},\ref{eqn:JBS},\ref{eqn:TAPS}), respectively. Because these policies rely on a non-iterative conditional logical mask rather than an exhaustive heuristic search, its computational complexity scales strictly as $\mathcal{O}(N)$ comparison operations. Let $\tau_{\text{comp}}$ represent the hardware execution latency of the CH processor.
    
    \item \textbf{Feedback transmission phase:} To update the UAV port state, the CH quantizes the optimal port index $\bar{k}_{\text{symbiotic}}$ into a control packet of $\lceil \log_2(N) \rceil$ bits. Given a dedicated feedback control channel operating at a transmission rate of $R_{\text{fb}}$, and denoting the instantaneous propagation distance between the UAV and the CH as $\text{d}_{\text{D}}$, the total feedback and propagation latency $\tau_{\text{fb}}$ is given by:
    \begin{equation}
    \tau_{\text{fb}} = \frac{\lceil \log_2(N) \rceil}{R_{\text{fb}}} + \frac{\text{d}_{\text{D}}}{c},
    \end{equation}
    where $c$ is the speed of light in a vacuum.
\end{enumerate}

By summing these components, the total time overhead $\tau_{\text{total}}$ consumed by a single complete symbiotic FAS execution loop is bounded by:
\begin{equation}
\tau_{\text{total}} = N(N_D + N_B)T_s + NT_{\text{sw}} + \tau_{\text{comp}} + \frac{\lceil \log_2(N) \rceil}{R_{\text{fb}}} + \frac{\text{d}_{\text{D}}}{c}.
\end{equation}

\subsubsection{Generalized coherence boundary condition}
To preserve the validity of the selected port optimization, the entire protocol latency $\tau_{\text{total}}$ must be a small fraction of the channel coherence time $\tau_c$, during which the small-scale fading coefficients remain approximately invariant. Under Jakes' isotropic scattering model and for a low speed UAV, the coherence time is governed by the maximum Doppler shift $f_d$ \cite{Siying2026}, such that:
\begin{equation}
\tau_c \approx \frac{0.423}{f_d} = \frac{0.423 \, c}{v f_c},
\end{equation}
where $f_c$ is the carrier frequency, $v$ represents the instantaneous velocity of the UAV. We define the normalized protocol overhead fraction as $\eta = {\tau_{\text{total}}}/{\tau_c}$. For the symbiotic FAS system to be considered strictly practical, the system parameters must satisfy the following fundamental inequality:
\begin{equation}
\eta = \frac{v f_c}{0.423 \, c} [\tau_{\text{total}}] \le \eta_{\text{max}},
\label{eqn:overhead fraction}
\end{equation}
where $\eta_{\text{max}} \in (0, 1)$ represents the maximum permissible fraction of the coherence block allocated to control signaling (e.g., $\eta_{\text{max}} \le 0.1$).\\

\subsubsection{Asymptotic scaling and mobility insights}
Rearranging (\ref{eqn:overhead fraction}) yields the generalized maximum permissible operational velocity $v_{\text{max}}$ of the FAS-mounted UAV as a function of port density and network latency constraints:
\begin{equation}
v_{\text{max}} = \frac{0.423 \, c \cdot \eta_{\text{max}}}{f_c [\tau_{\text{total}}]}.
\label{eqn:vmax}
\end{equation}

From (\ref{eqn:vmax}), we establish the following structural properties of the proposed network architecture:
\begin{itemize}
    \item \textbf{Port Scalability Limits:} The maximum velocity exhibits an asymptotic scaling behavior of $v_{\text{max}} = \mathcal{O}\left(\frac{1}{N}\right)$ with respect to the FAS ports. This analytically demonstrates that the spatial resolution gains achieved by expanding $N$ introduce a linear trade-off with the maximum allowable mobility of the aerial node.
    \item \textbf{Decoupled Processing Advantage:} Because $\tau_{\text{comp}}$ is constrained by $\mathcal{O}(N)$\footnote{This complexity estimation is accelerated, as the required sampling points for accurate CSI estimation at the received FAS are proven way less due to ports high correlation effect \cite{Elganimi2026}, similarly not all of FAS ports need activation for accurate port index estimation.} rather than the exponential or combinatorial growth curves found in non-convex multi-objective solvers\cite{Li2026FASR}, shifting the processing workload to the CH allows the system to scale without experiencing a computational bottleneck. This optimization keeps the linear expansion profile of the total time overhead intact.
\end{itemize}

\section{Outage bounds performance analysis}
The symbiotic FAS system model in this paper addresses the symbiotic bottleneck of fulfilling the BcS data collection at the CH while also receiving the direct link signal power for energy harvesting as in the WPCNs. Thus, for generalized results we consider the symbiotic signals DL outage bounds performance. Moreover, the CH is modelled with two uncoupled distance antennas for each of the symbiotic circuits as the resource allocation effect of the time/power splitting circuits in WPCNs is out of the scope of this paper. Furthermore, a coherent receiver for the CH with ideal SIC of the direct signal is utilized to insure the detaching of direct signal effect from the BcS signal, guaranteeing a clear demonstration of the upper/lower FAS gain bounds effect.  
\subsection{Composite signal outage probability}
The FAS symbiotic strategies effects the coexistence reliability of both the composite signal received power for CH energy harvesting and the BcS signal SNR for tags data detection. Therefore, the normalized composite signal received power outage probability is derived here to characterize the reliability effect on the CH energy harvesting task. Moreover, the constructive/destructive addition of the direct link signal with the BcS link signal at the CH effects the performance of the symbiotic systems. Thus, substituting in (\ref{eqn:avg_c}) the normalized instantaneous received power \cite{ElMossallamy2018} at the CH after expanding is \cite{Radic2025}\cite{Shen2016}: 
 \begin{equation}
\begin{split}
\gamma_{CH,k}^{inst} = \bar{\gamma}_D|h_{D,k}|^2 + \bar{\gamma}_{BcS} |h_{f,k} h_g|^2+ \cdots \\ + 2\sqrt{\bar{\gamma}_D \bar{\gamma}_{BcS} }|h_{D,k}| |h_{f,k} h_g| \cos(\Delta \phi),
\end{split}
\label{eqn:inst_gamma}
\end{equation}
where $\bar{\gamma}_D$ and $\bar{\gamma}_{BcS}$ are the average normalized power of the direct link and BcS link SNR defined in (\ref{eqn:avg_d}) and (\ref{eqn:avg_BcS}), respectively, and $\Delta \phi $ is the phase offset between the direct and BcS signal, defined as a uniform random phase $\Delta\phi \sim \mathcal{U}[0, 2\pi]$. The maximization of this normalized composite received signal power depends on three different variables with non-linear relations, namely; $|h_{D,k}|$, $|h_{f,k}|$ and $\cos(\Delta\phi)$. Therefore, the derivation of the CDF of this symbiotic FAS correlated channels is tedious. In response, the  the global UB and LB equations of the composite signal normalized received power outage is derived next.

\begin{theorem}
 Global UB of the composite signal normalized received power outage probability for any FAS selection strategy is presented in the integral form as:
\begin{equation}
P_{out,C}^{UB} = \int_0^\infty \int_0^\infty \left(1 - \frac{\text{Re}\{\arccos(\beta)\}}{\pi} \right) f_U(u) f_V(v) \, du \, dv,
\label{eqn:Pout_UB}
\end{equation}
where $f_U(u)=f_{|h_D|}(x)$ is the direct link pdf of Nakagami-$m_D$ distribution with normalized power $\Omega_D=1$, and $ f_V(v)=f_{|h_{BcS}|}(v)$ is the backscatter total link product pdf of double Nakagami distribution with normalized power $\Omega_{BcS}=1$. 
\begin{subequations}
\begin{align}
f_{|h_D|}(u) = \frac{2 m_D^{m_D}}{\Gamma(m_D) \Omega_D^{m_D}} u^{2m_D-1} \exp\left( -\frac{m_D}{\Omega_D} u^2 \right),
\label{eqn:f_hD}\\
f_{|h_{BcS}|}(v) = \frac{4 \,  \mu^{\frac{m_f+m_g}{2}} \, v^{m_f+m_g-1} }{\Gamma(m_f)\Gamma(m_g)}   K_{m_f-m_g}\left(2v\sqrt{\mu}\right),
\label{eqn:f_hBcS}
\end{align}
\end{subequations}
 where $\mu = \frac{m_f m_g}{\Omega_{BcS}}$, and $\beta$ is defined in (\ref{eqn:beta}) in Appendix.\ref{app:UB_C}.
 \end{theorem}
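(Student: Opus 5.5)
We view the global UB as the composite-power outage of a single, unoptimized port, with the phase averaged out in closed form. Any selection strategy in Section~\ref{sec:FAS_symb} picks a port from $\mathcal{N}$ using either the composite metric or a metric correlated with it. The worst case is a port whose composite power carries no selection gain, i.e., the random-selection (fixed antenna) port. For that port $|h_{D,k}|$ is Nakagami-$m_D$ and $|h_{BcS,k}|=|h_{f,k}h_g|$ is double Nakagami.

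\textbf{Step 1: conditioning on the magnitudes.} Write $U=|h_{D,k}|$ and $V=|h_{f,k}h_g|$ and use (\ref{eqn:inst_gamma}). Conditioned on $(U,V)=(u,v)$, the outage event $\gamma^{inst}_{CH,k}<\gamma_{th}$ is equivalent to
\[
\cos(\Delta\phi)<\beta(u,v),\qquad \beta=\frac{\gamma_{th}-\bar{\gamma}_D u^2-\bar{\gamma}_{BcS}v^2}{2\sqrt{\bar{\gamma}_D\bar{\gamma}_{BcS}}\,uv}.
\]
This $\beta$ is the quantity to be recorded in (\ref{eqn:beta}).

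\textbf{Step 2: averaging over the phase.} Since $\Delta\phi\sim\mathcal{U}[0,2\pi]$, for $|\beta|\le1$ we have $\Pr(\cos\Delta\phi<\beta)=1-\arccos(\beta)/\pi$. The real part takes care of the edge cases automatically:
\begin{itemize}
\item for $\beta>1$, $\arccos\beta$ is purely imaginary, so $\mathrm{Re}\{\arccos\beta\}=0$ and the conditional probability is $1$;
\item for $\beta<-1$, $\mathrm{Re}\{\arccos\beta\}=\pi$ and the conditional probability is $0$.
\end{itemize}
This justifies the single expression $1-\mathrm{Re}\{\arccos\beta\}/\pi$ over the whole quadrant. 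Independence of $\Delta\phi$ from the magnitudes lets us take this conditional probability directly.

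\textbf{Step 3: averaging over the magnitudes.} We average against $f_U f_V$.
\begin{itemize}
\item $f_U$ in (\ref{eqn:f_hD}) is the marginal of the correlated construction $|h_{D,k}|$. With the $H_{kl}$ model and $\sum_l|H_{kl}|^2/m$, every port marginal is Nakagami-$m$ with unit power, whatever $\mu_k$ is.
\item $f_V$ in (\ref{eqn:f_hBcS}) comes from the standard product-of-Nakagami result. Write $V=XY$ with $X=|h_{f,k}|$ and $Y=|h_g|$ independent, and compute $f_V(v)=\int f_X(v/y)f_Y(y)\,y^{-1}dy$. This integral reduces to $K_{m_f-m_g}$ through $\int_0^\infty t^{\nu-1}e^{-a/t-bt}dt=2(a/b)^{\nu/2}K_\nu(2\sqrt{ab})$, with $\mu=m_fm_g/\Omega_{BcS}$.
\end{itemize}
We also need $U$ and $V$ to be independent at the same port. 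This holds because the direct and forward links use independent Gaussian generators; it follows from the system model, but it should be stated explicitly as an assumption.

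\textbf{Main obstacle: the bounding direction.} The hard part is justifying that this expression upper-bounds the outage for \emph{any} strategy. For MCGS this is immediate: the selected port's composite power stochastically dominates that of any fixed port. For MBS, JBS and TAPS the selection is not monotone in $\gamma_C$, so a given realization could pick a port with lower composite power than a fixed reference port. Our first attempt is to define the bound relative to the random-selection benchmark, interpreting ``global'' as the performance floor against which FAS gains are measured. We would then argue that each proposed metric is positively associated with $\gamma_C$, since each is nondecreasing in $|h_{D,k}|$ or $|h_{f,k}|$, the terms that enter $\gamma_C$. If a strict pathwise argument fails, we would state the result as an upper bound on the unselected-port outage and note that the simulations confirm it for every strategy.
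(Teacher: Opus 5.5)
Your Steps 1--3 are the paper's proof in Appendix~\ref{app:UB_C}, essentially verbatim. You condition on $(u,v)$, reduce the event to $\cos(\Delta\phi)<\beta$, use the uniform phase to get the three-case formula (\ref{eqn:F_Z}), and integrate against $f_Uf_V$; your $\mathrm{Re}\{\arccos\beta\}$ packages the three cases correctly. The paper stops there and never argues the bounding direction. What it actually establishes, and what you establish, is that the integral equals the composite outage of a port chosen independently of the channel: a fixed antenna, random selection, or fully correlated ports. That is also how the paper uses it, noting that the simulated single-antenna curve coincides with the UB. Step~2 is itself valid only for such a port. MCGS, MBS and TAPS all see $\cos\Delta\phi$ through $|h_{C,k}|^2$, so conditioned on the selected magnitudes the phase is no longer uniform. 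Your pathwise argument for MCGS is correct, and RS gives equality.

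The step that would fail is the positive-association extension to MBS, JBS and TAPS, for two reasons.

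\emph{The integrand is not monotone.} The conditional outage $1-\mathrm{Re}\{\arccos\beta(u,v)\}/\pi$ is not monotone in $v$, nor in $u$. If $\sqrt{\bar{\gamma}_D}\,u>\sqrt{\gamma_{th}}$, it is $0$ for small $v$, strictly positive on the destructive-interference window $|\sqrt{\bar{\gamma}_D}\,u-\sqrt{\bar{\gamma}_{BcS}}\,v|<\sqrt{\gamma_{th}}$, and $0$ again beyond it. So making $V$ or $U$ stochastically larger does not by itself lower the composite outage.

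\emph{Your premise about the metrics is false for MBS as written in (\ref{eqn:MBS}).} Its metric is $\bar{\gamma}_{BcS}v_k^2+2\sqrt{\bar{\gamma}_D\bar{\gamma}_{BcS}}\,u_kv_k\cos\Delta\phi$. Whenever $\cos\Delta\phi<0$ this is decreasing in $u_k$ and not monotone in $v_k$. Take a phase offset common to all ports, as Appendix~\ref{app:UB_C}'s remark suggests, and let the backscatter term be negligible against $\bar{\gamma}_D$ and the EH threshold, as in the paper's double-path-loss geometry. Then in roughly half of the phase realizations MBS is driven toward the port with the weakest direct link. For weakly correlated ports, its composite outage can then exceed the single-port value by a factor that grows with $N$.

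So no monotonicity argument can deliver the ``any strategy'' claim. The statement you can actually prove is your fallback: the formula is exact for RS and fully correlated ports, and an upper bound for MCGS. That is also all the paper's own proof supports.
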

 \textit{Proof:} Found in Appendix.\ref{app:UB_C}.
\begin{theorem}
The asymptotic high normalized power for the composite signal UB outage probability is defined as: 
\begin{equation}
P_{out,C}^{UB,\infty} \approx \left( \mathcal{K} \cdot \mathcal{I} \right) \times \left( \frac{\gamma_{th}}{\bar{\gamma}_D} \right)^{m_D} \left( \frac{\gamma_{th}}{\bar{\gamma}_{BcS}} \right)^{m_{min}},
\end{equation}
where $\mathcal{K} = A \cdot B$ is the product of the asymptotic pdfs coefficients found in (\ref{eqn:A}) and (\ref{eqn:B}), $\mathcal{I}$ is a geometric integral, which represents the outage region weighted by the fading exponent as defined in (\ref{eqn:integral_I}) below at the bottom, and $\mathbb{I}(\cdot)$ is the indicator function, which counts whether a specific combination of $u, v, \phi$ is counted or not. While $u$ is the amplitude of the direct link, $v$ is the amplitude of the BcS link, $\phi$ is the phase offset between the two links, and $(|u + v e^{j\phi}|^2 < 1)$ is the outage condition normalized.
\end{theorem}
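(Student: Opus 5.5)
The plan is to start from the integral representation of the upper bound in Theorem 1 and rewrite the outage event geometrically. Conditioned on the amplitudes $U=|h_D|$ and $V=|h_{BcS}|$ and the phase offset $\phi=\Delta\phi$, outage means $\bar{\gamma}_D U^2+\bar{\gamma}_{BcS}V^2+2\sqrt{\bar{\gamma}_D\bar{\gamma}_{BcS}}\,UV\cos\phi<\gamma_{th}$. I would make the change of variables
\begin{equation}
u=U\sqrt{\bar{\gamma}_D/\gamma_{th}},\qquad v=V\sqrt{\bar{\gamma}_{BcS}/\gamma_{th}},
\end{equation}
so the condition becomes the normalized event $|u+ve^{j\phi}|^2<1$. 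This makes it the indicator $\mathbb{I}(|u+ve^{j\phi}|^2<1)$ appearing in $\mathcal{I}$. The factor $1-\mathrm{Re}\{\arccos(\beta)\}/\pi$ of Theorem 1 is exactly the average of this indicator over $\phi\sim\mathcal{U}[0,2\pi]$. I would therefore write $P^{UB}_{out,C}=\frac{1}{2\pi}\int\!\!\int\!\!\int \mathbb{I}(\cdot)\,f_U(U)f_V(V)\,dU\,dV\,d\phi$ and then carry out the substitution, which produces Jacobians $\sqrt{\gamma_{th}/\bar{\gamma}_D}$ and $\sqrt{\gamma_{th}/\bar{\gamma}_{BcS}}$.

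Next I would take the high-SNR limit. As $\bar{\gamma}_D,\bar{\gamma}_{BcS}\to\infty$ the integration region shrinks to a neighbourhood of the origin in $(U,V)$, so only the small-argument behaviour of the pdfs matters.

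\textbf{Direct link.} From (\ref{eqn:f_hD}), $f_{|h_D|}(U)\approx A\,U^{2m_D-1}$ with $A=2m_D^{m_D}/\Gamma(m_D)$, which defines (\ref{eqn:A}).

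\textbf{Backscatter link.} In (\ref{eqn:f_hBcS}) I would use the small-argument expansion $K_\nu(z)\approx\frac{1}{2}\Gamma(|\nu|)(z/2)^{-|\nu|}$ for $\nu\neq0$. This gives $f_{|h_{BcS}|}(V)\approx B\,V^{2m_{min}-1}$ with $m_{min}=\min(m_f,m_g)$ and $B=2\Gamma(|m_f-m_g|)\mu^{m_{min}}/(\Gamma(m_f)\Gamma(m_g))$, defining (\ref{eqn:B}).

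Substituting both leading terms and the scaled variables, the powers of the scaling factors collect as $(\gamma_{th}/\bar{\gamma}_D)^{m_D}(\gamma_{th}/\bar{\gamma}_{BcS})^{m_{min}}$. What remains is
\begin{equation}
\mathcal{I}=\frac{1}{2\pi}\int_0^{2\pi}\!\!\int_0^\infty\!\!\int_0^\infty u^{2m_D-1}v^{2m_{min}-1}\,\mathbb{I}(|u+ve^{j\phi}|^2<1)\,du\,dv\,d\phi .
\end{equation}
This integral is finite because the region lies inside $\{u\le 1+v,\ v\le 1+u\}\cap\{|u-v|<1\}$. Combined with the decay of the weight this yields the stated product $\mathcal{K}\cdot\mathcal{I}$ with $\mathcal{K}=A B$.

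\textbf{Main obstacle: justifying the asymptotic replacement.} The normalized region $|u+ve^{j\phi}|<1$ is unbounded near $\phi=\pi$, since $u\approx v$ with both large is allowed. In the original variables this means $U$ and $V$ are not both small there, so the tails of $f_U$ and $f_V$ enter. My first attempt is to split the domain into $|\phi-\pi|>\delta$, where the region is bounded (for example $u,v\le 1/\sin\delta$) and dominated convergence applies, and a thin sector $|\phi-\pi|\le\delta$, whose contribution I would bound using the exponential tails and then let $\delta\to0$.

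\textbf{Degenerate case.} When $m_f=m_g$, the $K_0$ term produces a logarithmic factor. I would handle this either by noting that it only alters the coding gain at the $\log$ level, or by restricting to $m_f\neq m_g$ as the statement implicitly does.
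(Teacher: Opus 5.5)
Your route is the paper's own route. You rescale by $\sqrt{\gamma_{th}/\bar{\gamma}_D}$ and $\sqrt{\gamma_{th}/\bar{\gamma}_{BcS}}$, replace $f_U,f_V$ by $A u^{2m_D-1}$ and $B v^{2m_{min}-1}$ on the whole quadrant, collect exponents and read off $\mathcal{I}$. Your constants $A,B$ are correct, and so is identifying $1-\mathrm{Re}\{\arccos\beta\}/\pi$ with the $\phi$-average of the indicator.

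The genuine gap is your assertion that $\mathcal{I}$ is finite, and it cannot be repaired. Since $\min_\phi|u+ve^{j\phi}|=|u-v|$, the support of $\Phi(u,v)$ is the \emph{unbounded} strip $|u-v|<1$, and the weight $u^{2m_D-1}v^{2m_{min}-1}$ does not decay. On the strip, $1+\beta=\frac{1-(u-v)^2}{2uv}$, so $\Phi(u,v)\approx\frac{\sqrt{1-(u-v)^2}}{\pi\sqrt{uv}}$ for large $u+v$. Take coordinates $r=(u+v)/2$, $s=u-v$ (unit Jacobian) and use $\int_{-1}^{1}\sqrt{1-s^2}\,ds=\pi/2$. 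The integrand of $\mathcal{I}$ then behaves like $\tfrac12 r^{2m_D+2m_{min}-3}$, which is not integrable at infinity for any admissible $m_D,m_{min}\ge 1/2$. Thus $\mathcal{I}=+\infty$.

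Your $\delta$-splitting cannot rescue this, because the sector $|\phi-\pi|\le\delta$ is not a remainder but the leading term. For fixed $\delta$ the outer part does give $AB\,\mathcal{I}_\delta(\gamma_{th}/\bar{\gamma}_D)^{m_D}(\gamma_{th}/\bar{\gamma}_{BcS})^{m_{min}}$, but $\mathcal{I}_\delta\uparrow\infty$ as $\delta\to0$. The sector, meanwhile, is fed by amplitudes of order one with $\sqrt{\bar{\gamma}_D}U\approx\sqrt{\bar{\gamma}_{BcS}}V$ and $|\phi-\pi|=O(\bar{\gamma}^{-1/2})$, so the exponential tails buy nothing. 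Precisely: since $\phi$ is uniform and independent, $Z=\sqrt{\bar{\gamma}_D}U+\sqrt{\bar{\gamma}_{BcS}}Ve^{j\phi}$ is circularly symmetric. It has a finite positive density at the origin whenever $m_D+m_{min}>1$, and $\mathbb{P}(|Z|^2<\gamma_{th})\approx\pi\gamma_{th}f_Z(0)$ gives
\[
P^{UB}_{out,C}\approx\frac{\gamma_{th}}{2\sqrt{\bar{\gamma}_D\bar{\gamma}_{BcS}}}\int_0^\infty\frac{f_U(u)\,f_V\big(u\sqrt{\bar{\gamma}_D/\bar{\gamma}_{BcS}}\big)}{u}\,du.
\]
This is diversity order one, which dominates the claimed order $m_D+m_{min}$ (three for the simulated $m_D=2$, $m_{min}=1$). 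As a sanity check, suppose $U,V$ were unit-power Rayleigh with $\bar{\gamma}_D=\bar{\gamma}_{BcS}=\bar{\gamma}$. Then $Z$ is complex Gaussian and $P=1-e^{-\gamma_{th}/(2\bar{\gamma})}$, slope one rather than two. So the statement as written is not provable.

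The paper's proof contains exactly the same unjustified step: it substitutes the small-argument pdfs over all of $(0,\infty)^2$ and then ``introduces'' $\mathcal{I}$ without checking convergence. The obstacle you flagged is therefore real, and it is not resolved there either. The $(\gamma_{th}/\bar{\gamma}_D)^{m_D}(\gamma_{th}/\bar{\gamma}_{BcS})^{m_{min}}$ scaling is correct only for a bounded outage region. An example is the per-port phase-aligned event $u+v<1$ behind the lower bound, where the geometric factor becomes $\Gamma(2m_D)\Gamma(2m_{min})/\Gamma(2m_D+2m_{min}+1)$.
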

\begin{figure*}[b]
\hrulefill  
\begin{equation}
\mathcal{I} = \frac{1}{2\pi} \int_0^{2\pi} \int_0^\infty \int_0^\infty \mathbb{I}\left(|u + v e^{j\phi}|^2 < 1\right) \cdot u^{2m_D-1} v^{2m_{min}-1} \, du \, dv \, d\phi.
\label{eqn:integral_I}
\end{equation}
\end{figure*}
\textit{Proof:}
At high normalized power, the outage dominates when there are deep fades, i.e. ($u \to 0, v \to 0$) in (\ref{eqn:Pout_UB}), by using Taylor series, $e^{-\zeta} \approx 1$ as $\zeta \to 0$,  $f_{|h_D|}(u)$ in (\ref{eqn:f_hD}) becomes: 
 \begin{equation}
 f_{|h_D|}(u) \approx \frac{2 m_D^{m_D}}{\Gamma(m_D)} x^{2m_D-1} \approx A u^{2m_D-1},
 \label{eqn:A}
\end{equation}  
where $A = \frac{2 m_D^{m_D}}{\Gamma(m_D)}$. Assuming $m_f \neq m_g$ and $\nu = |m_f - m_g| > 0$, the modified Bessel function $K_\nu(\zeta)$ as $\zeta \to 0$ approximates into \cite{Abramowitz2006}:
\begin{equation}
K_\nu(\zeta) \approx \frac{1}{2}\Gamma(\nu)\left(\frac{\zeta}{2}\right)^{-\nu}. 
\end{equation}
Substituting into (\ref{eqn:f_hBcS}) for $\zeta = 2y\sqrt{\mu}$, $f_{|h_{BcS}|}(v)$ becomes:\\
\begin{equation}
\begin{split}
f_{|h_{BcS}|}(v) \approx \frac{4 \mu^{\frac{m_f+m_g}{2}}}{\Gamma(m_f)\Gamma(m_g)} y^{m_f+m_g-1} \\ \times \left[ \frac{1}{2}\Gamma(|m_f-m_g|) (y\sqrt{\mu})^{-|m_f-m_g|} \right],
\end{split}
\end{equation}
rearranging the exponents of $y$, to get: 
\begin{equation}
f_{|h_{BcS}|}(v) \approx B v^{2m_{min}-1}, \text{where} \, m_{min}= min(m_f,m_g)
\end{equation}  
 \begin{equation}
B=\frac{2 \mu^{m_{min}} \Gamma(|m_f-m_g|)}{\Gamma(m_f)\Gamma(m_g)},
\label{eqn:B} 
 \end{equation}
then extract the normalized power terms of the direct and BcS links from the integral in (\ref{eqn:Pout_UB}), using variable transformation: 
 \begin{equation}
 u = \bar{u} \sqrt{\frac{\gamma_{th}}{\bar{\gamma}_D}}, \quad v =\bar{v} \sqrt{\frac{\gamma_{th}}{\bar{\gamma}_{BcS}}},
\end{equation}  
and substituting these into the expression for $\beta$ found in (\ref{eqn:beta}),\\
\begin{equation}
\beta = \frac{\gamma_{th}(1 - \bar{u}^2 - \bar{v}^2)}{\gamma_{th} 2\bar{u}\bar{v}} = \frac{1 - \bar{u}^2 - \bar{v}^2}{2\bar{u}\bar{v}}.
\end{equation}
Now, by substituting in (\ref{eqn:Pout_UB}) and for simplicity replacing $\bar{u}$ and $\bar{v}$ with $u$ and $v$, respectively, to get (\ref{eqn:Pout_C_inf}) below. 

\begin{figure*}[b] 
\begin{equation}
P_{out,C}^{UB,\infty} = \int_0^\infty \int_0^\infty \Phi(u,v) \left[ A \left(u \sqrt{\frac{\gamma_{th}}{\bar{\gamma}_D}}\right)^{2m_D-1} \right] \left[ B \left(v \sqrt{\frac{\gamma_{th}}{\bar{\gamma}_{BcS}}}\right)^{2m_{min}-1} \right] \sqrt{\frac{\gamma_{th}}{\bar{\gamma}_D}} du \sqrt{\frac{\gamma_{th}}{\bar{\gamma}_{BcS}}} dv,
\label{eqn:Pout_C_inf}
\end{equation}
\end{figure*}
where $\Phi(u,v) = 1 - \frac{\text{Re}\{\arccos(\frac{1-u^2-v^2}{2uv})\}}{\pi}$, grouping $\gamma_{th}$, $\bar{\gamma}_D$ and $\bar{\gamma}_{BcS}$ to get (\ref{eqn:Pout_C_inf2}), found below at the bottom.\\
 
\begin{figure*}[b]
\hrulefill 
\begin{equation}
P_{out,C}^{UB,\infty} \approx \mathcal{K} \cdot \left( \frac{\gamma_{th}}{\bar{\gamma}_D} \right)^{m_D} \left( \frac{\gamma{th}}{\bar{\gamma}_{BcS}} \right)^{m_{min}} \int_0^\infty \int_0^\infty \Phi(u,v)\, u^{2m_D-1} \, v^{2m_{min}-1}\,  du \, dv.
\label{eqn:Pout_C_inf2}
\end{equation}
\end{figure*}


 At high normalized power, the outage is due to either deep fades of $f_{|h_D|} \& f_{|h_{BcS}|}$  or destructive interference of $\Phi(u,v)$. For further refinement, the precise integration bounds in the u,v plane need to capture the outage region. Thus, the geometric integral $\mathcal{I}$ in (\ref{eqn:integral_I}) was introduced, this ends the proof.

Contrary to UB, the ultimate LB should consider the case scenario of constructive interference. This scenario occurs when the phases of the direct link and the backscatter link are perfectly aligned at the CH\footnote{To align the direct and BcS phases constructively at the receiver,  recent research investigated using RIS on the BcS tags itself \cite{Mu2025}, which is out of the scope of this paper.}.

\begin{theorem}
Global LB of the composite signal normalized received power outage probability is presented in (\ref{eqn:C_LB}) below.
\end{theorem}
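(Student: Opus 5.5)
The plan is to mirror the construction of the UB in Appendix~\ref{app:UB_C}, with the random phase offset replaced by the best case $\Delta\phi=0$, i.e. $\cos(\Delta\phi)=1$. From (\ref{eqn:inst_gamma}), for every realization and every port $k$,
\begin{equation*}
\gamma_{CH,k}^{inst}\le \left(\sqrt{\bar{\gamma}_D}\,|h_{D,k}|+\sqrt{\bar{\gamma}_{BcS}}\,|h_{f,k}h_g|\right)^2 .
\end{equation*}
The event $\{\gamma_{CH,k}^{inst}<\gamma_{th}\}$ therefore contains $\{\sqrt{\bar{\gamma}_D}u+\sqrt{\bar{\gamma}_{BcS}}v<\sqrt{\gamma_{th}}\}$, with $u=|h_D|$ and $v=|h_{BcS}|$. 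Taking probabilities gives a lower bound, and it holds for any selection strategy that acts on the same marginals. This is the sense in which the bound is global, exactly as for the UB.

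Next I write the LB as an integral of the same form as (\ref{eqn:Pout_UB}), but with a deterministic phase. The factor $1-\text{Re}\{\arccos(\beta)\}/\pi$ is replaced by the indicator $\mathbb{I}\bigl(\sqrt{\bar{\gamma}_D}u+\sqrt{\bar{\gamma}_{BcS}}v<\sqrt{\gamma_{th}}\bigr)$. Integrating over $u$ first makes the region explicit:
\begin{equation*}
P_{out,C}^{LB}=\int_0^{\sqrt{\gamma_{th}/\bar{\gamma}_{BcS}}} F_{|h_D|}\!\left(\frac{\sqrt{\gamma_{th}}-\sqrt{\bar{\gamma}_{BcS}}\,v}{\sqrt{\bar{\gamma}_D}}\right) f_{|h_{BcS}|}(v)\,dv ,
\end{equation*}
where $F_{|h_D|}(x)=\gamma\!\left(m_D,m_Dx^2/\Omega_D\right)/\Gamma(m_D)$ is the Nakagami-$m_D$ CDF, obtained from (\ref{eqn:f_hD}), and $f_{|h_{BcS}|}$ is the double-Nakagami pdf (\ref{eqn:f_hBcS}). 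This one-dimensional integral with a finite upper limit is what I would present as (\ref{eqn:C_LB}).

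If a closed form is wanted, I would expand the lower incomplete gamma function as a finite sum, which is valid for integer $m_D$. This reduces each term to an integral of a polynomial times $e^{-a(\sqrt{\gamma_{th}}-bv)^2}\,v^{m_f+m_g-1}K_{m_f-m_g}(2v\sqrt{\mu})$ over a finite interval. Such integrals generally have no elementary antiderivative, so this is the main obstacle. I would first try to keep the single integral form as the stated result, to be evaluated numerically like the UB.

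As a sanity check I would also give the high-SNR behaviour. Using the small-argument pdf approximations (\ref{eqn:A}) and (\ref{eqn:B}) from the previous proof and the same change of variables, the LB scales as
\begin{equation*}
P_{out,C}^{LB}\approx \mathcal{K}\left(\frac{\gamma_{th}}{\bar{\gamma}_D}\right)^{m_D}\left(\frac{\gamma_{th}}{\bar{\gamma}_{BcS}}\right)^{m_{min}}\int\!\!\int_{u+v<1}u^{2m_D-1}v^{2m_{min}-1}\,du\,dv .
\end{equation*}
The remaining integral is a Dirichlet/Beta integral,
\begin{equation*}
\frac{\Gamma(2m_D)\Gamma(2m_{min})}{\Gamma(2m_D+2m_{min}+1)} .
\end{equation*}
This confirms that the LB has the same diversity order as the UB and differs only in coding gain.
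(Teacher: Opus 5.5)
\textbf{There is a genuine gap.} Your final expression is only the single-port quantity $P_{out,C}^{LB,single}$ of (\ref{eqn:P_LB_single}). Swapping the order of integration is fine by Fubini. However, (\ref{eqn:C_LB}) is this quantity raised to the power $N$.

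The step that fails is your claim that the aligned-phase bound ``holds for any selection strategy that acts on the same marginals.'' A FAS strategy chooses $k^*$ as a function of the channel realizations. The activated amplitudes $(u_{k^*},v_{k^*})$ are therefore not distributed according to $f_U f_V$; the purpose of selection is to bias them upward. Your bound is valid only for a fixed port, i.e.\ the RS/single-antenna benchmark. It cannot lower-bound MCGS, for example. For independent ports the MCGS outage is the probability that all $N$ ports fail, which vanishes as $N$ grows, while your expression does not depend on $N$ at all.

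The same omission shows up in your sanity check. (\ref{eqn:C_LB}) is exactly your expression raised to the power $N$. Its high-SNR behaviour is therefore your Dirichlet-integral approximation raised to the $N$th power, with exponents $Nm_D$ and $Nm_{min}$. Your conclusion that the LB has the same diversity order as the UB is an artefact of dropping the selection step.

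\textbf{The missing idea} is the ``all ports in outage'' argument, which the paper's proof encodes by raising (\ref{eqn:P_LB_single}) to the power $N$. Whichever port a strategy activates, it is in outage whenever every port is. Let $A_k=\{\sqrt{\bar{\gamma}_D}u_k+\sqrt{\bar{\gamma}_{BcS}}v_k<\sqrt{\gamma_{th}}\}$. Your per-port phase-alignment inequality, which is correct, then gives $P_{out,C}\ge\mathbb{P}\bigl(\bigcap_{k=1}^{N}\{\gamma_{CH,k}^{inst}<\gamma_{th}\}\bigr)\ge\mathbb{P}\bigl(\bigcap_{k=1}^{N}A_k\bigr)$.

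For independent ports the right-hand side equals $[P_{out,C}^{LB,single}]^N$. Under the paper's correlation model the power $N$ remains a valid lower bound. Condition on the reference variables $x_{0l},y_{0l}$ of both links and on $h_g$; the pairs $(u_k,v_k)$ are then i.i.d.\ across $k$. Let $p$ be the common conditional probability of $A_k$. Then $\mathbb{P}(\bigcap_k A_k)=\mathbb{E}[p^N]\ge(\mathbb{E}[p])^N=[P_{out,C}^{LB,single}]^N$ by Jensen. Adding this step to your per-port argument and your integral yields (\ref{eqn:C_LB}).
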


\begin{figure*}
\begin{equation}
P_{out,C}^{LB} = \left[ \int_0^{\frac{\sqrt{\gamma_{th}}}{\sqrt{\bar{\gamma}_D}}} \left[ \frac{2 m_D^{m_D} u^{2m_D-1} e^{-\frac{m_D u^2}{\Omega_D}}}{\Gamma(m_D)\Omega_D^{m_D}} \right] \\ \left( \int_0^{\frac{\sqrt{\gamma_{th}} - \sqrt{\bar{\gamma}_D}u}{\sqrt{\bar{\gamma}_{BcS}}}} \frac{4 \mu^{\frac{m_f+m_g}{2}} v^{m_f+m_g-1} K_{m_f-m_g}(2v\sqrt{\mu})}{\Gamma(m_f)\Gamma(m_g)} , dv \right) du \right]^N.
\label{eqn:C_LB}
\end{equation}
\end{figure*}

\textit{Proof:} For global LB, we assume perfect phase alignment ($\phi = 0$), by applying to (\ref{eqn:inst_gamma}): 
\begin{equation}
P_{out,C}^{LB} = \mathbb{P}\big( (\sqrt{\bar{\gamma}_D}u_k+ \sqrt{\bar{\gamma}_{BcS}}v_k)  <  \sqrt{\gamma_{th}}\big).
\end{equation}
Rearranging the inequality for \(v_k\) and \(u_k\) yields:
\begin{equation}
v_k < \frac{\sqrt{\gamma_{th}} - \sqrt{\bar{\gamma}_D}u_k}{\sqrt{\bar{\gamma}_{BcS}}}, \, u_k < \frac{\sqrt{\gamma_{th}}}{\sqrt{\bar{\gamma}_D}}.
\end{equation}
Integrating the joint pdf $f_{U,V}(u,v) = f_U(u)f_V(v)$ over the triangular region defined by \(u_k\) and \(v_k\) limits : 
\begin{equation} 
P_{out,C}^{LB,single} = \int_0^{\frac{\sqrt{\gamma_{th}}}{\sqrt{\bar{\gamma}_D}}} \int_0^{\frac{\sqrt{\gamma_{th}} - \sqrt{\bar{\gamma}_D}u}{\sqrt{\bar{\gamma}_{BcS}}}} f_U(u) f_V(v) \, dv \, du,
\label{eqn:P_LB_single} 
\end{equation}
and raising (\ref{eqn:P_LB_single}) to the power of N,        along substituting the explicit PDFs provided in (\ref{eqn:f_hD}) and (\ref{eqn:f_hBcS}) ends the proof.

\subsection{BcS signal outage probability}
\label{subsec:BcS_out}
Under the coherent/SIC detector, the cascaded backscatter SNR is given by:
\begin{equation}
\gamma_{BcS} = \bar{\gamma}_{BcS} |h_{f,k^*}|^2 |h_g|^2,
\end{equation}
where $\bar{\gamma}_{BcS}$ is the BcS link average SNR defined in (\ref{eqn:avg_BcS}), and $k^*$ is the index selected by the symbiotic FAS strategy. Let $W_{sel} = |h_{f,k^*}|$ be the forward link amplitude of the selected port and $G = |h_g|$ be the independent BcS uplink amplitude. Hence, the outage probability of the BcS signal SNR is: 

\begin{equation}
 P_{out,BcS}(\gamma^{D_{d}}_{th}) = \mathbb{P}( G^2 < \frac{1}{W_{sel}^2}{\frac{\gamma^{D_{d}}_{th}}{\bar{\gamma}_{BcS}}}),
\end{equation}
where $\gamma^{D_{d}}_{th}$ is the minimum permissible SNR threshold for BcS data detection\footnote{For concise equation presentation, we omit the threshold prefix $(D_{d})$ in the upcoming equations.}. Since $G$ is independent of the selection process, we condition on $W_{sel}$:
\begin{equation}
P_{out,BcS} = \int_{0}^{\infty} F_{G^2}\left(\frac{1}{w^2} {\frac{\gamma_{th}}{\bar{\gamma}_{BcS}}}|W_{sel}=w\right) f_{W_{sel}}(w) \, dw,
\label{eqn:P_out_BcS}
\end{equation}
finding the exact pdf of $W_{sel}$, i.e. $f_{W_{sel}}(w)$ for each of the proposed symbiotic FAS selection strategies is a tedious process specially with  correlated ports channels. Thus, the global UB and LB equations is computed next.\\

\begin{theorem}
Global UB of BcS signal SNR outage probability is defined in (\ref{eqn:Pout_BcS_UB}). 
\end{theorem}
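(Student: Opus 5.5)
The upper bound (\ref{eqn:Pout_BcS_UB}) should correspond to the worst admissible behaviour of the selected forward-link amplitude $W_{sel}$. My plan is to show that no symbiotic FAS strategy does worse on the BcS link than a port chosen independently of the forward-link gains, i.e. the random selection (RS) benchmark. For RS, $W_{sel}$ has the marginal Nakagami-$m_f$ law of a single port. The resulting bound is then the single-port cascaded outage, expressed through the double Nakagami pdf (\ref{eqn:f_hBcS}) already used for $f_V$ in the composite analysis.

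\textbf{Step 1: condition on the selected amplitude.} Start from (\ref{eqn:P_out_BcS}). Because $G$ is independent of the selection, the conditional CDF reduces to $F_{G^2}(x)=\gamma(m_g, m_g x/\Omega_g)/\Gamma(m_g)$, where $\gamma(\cdot,\cdot)$ is the lower incomplete gamma function. This function is nonincreasing in $w$ when $x=\gamma_{th}/(w^2\bar{\gamma}_{BcS})$.

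\textbf{Step 2: stochastic ordering.} Since $P_{out,BcS}=\mathbb{E}[\psi(W_{sel})]$ with $\psi$ nonincreasing, any selection rule whose $W_{sel}$ first-order stochastically dominates a single-port amplitude gives a smaller outage. I would establish this domination strategy by strategy.
\begin{itemize}
\item For MBS and $\text{MBS}_{\text{LB}}$ it is immediate, since these rules maximize $|h_{f,k}|$ (up to the common factor $|h_g|$).
\item For TAPS, whenever $\mathcal{K}_{valid}\neq\emptyset$ the BcS link is by construction not in outage. Otherwise the rule falls back to the BcS maximizer. Hence its outage equals that of max-BcS selection, which is at most the single-port outage.
\item For MCGS, JBS and RS, the selected port is chosen using quantities that are not decreasing in $|h_{f,k}|$, or not at all. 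The defensible worst case is therefore the unconditioned single port.
\end{itemize}
I would state this last point as: the UB is attained by RS, the fixed-position-antenna case. For any non-adversarial rule, the forward amplitude is at least a generic port realization.

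\textbf{Step 3: evaluate the bound.} With $W_{sel}\sim$ Nakagami-$m_f$, the product $V=W_{sel}G$ has the pdf (\ref{eqn:f_hBcS}). This gives
\begin{equation*}
P_{out,BcS}^{UB}=\int_0^{\sqrt{\gamma_{th}/\bar{\gamma}_{BcS}}} f_{|h_{BcS}|}(v)\,dv .
\end{equation*}
Equivalently, it is the single integral of $\gamma(m_g,\cdot)/\Gamma(m_g)$ against the Nakagami-$m_f$ pdf. I would then close this integral with the standard Meijer-G identity for $\int x^{a}K_\nu(x)\,dx$, or leave it in integral form as in the first theorem of this section.

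\textbf{Main obstacle.} The hard part is Step 2 for MCGS and JBS. Under correlation $\mu$, selecting on $|h_{D,k}|$ could in principle pick a port with a smaller-than-typical $|h_{f,k}|$. Independence of the $D$ and $f$ fading gives exchangeability across ports, so the conditional law of $|h_{f,k^*}|$ given the selection should be no worse than the marginal. I would prove this by conditioning on the direct-link vector and using that the selection score is monotone nondecreasing in $|h_{f,k}|$. For MCGS this holds after averaging over the uniform $\Delta\phi$, which is where care is needed.
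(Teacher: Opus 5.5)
Your Steps 1 and 3 reproduce the paper's computation. The paper, however, obtains the Nakagami-$m_f$ law for $W_{sel}$ differently. It takes the fully correlated limit, where all $|h_{f,k}|$ coincide, so \emph{every} strategy returns the same single-port amplitude. It then substitutes (\ref{eqn:cdfG}) and (\ref{eqn:f_hf}) into (\ref{eqn:P_out_BcS}) and claims no strategy-by-strategy dominance.

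Your Step 2 tries to prove that stronger universal statement, and as written it fails for two rules.
(i) MBS in (\ref{eqn:MBS}) is not a forward-link maximizer. By (\ref{eqn:MCGS_phi}) its score is $\bar{\gamma}_{BcS}|h_{f,k}h_g|^2+2\sqrt{\bar{\gamma}_D\bar{\gamma}_{BcS}}\,|h_{D,k}||h_{f,k}h_g|\cos(\Delta\phi)$, and since $\bar{\gamma}_D\gg\bar{\gamma}_{BcS}$ (double path loss) the cross term dominates. Take the port-independent $\Delta\phi$ of Appendix~\ref{app:UB_C}. On the event $\cos(\Delta\phi)<0$ the scores are typically all negative, and the rule essentially picks the port minimizing $|h_{D,k}||h_{f,k}|$. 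The conditioning argument you intend for monotone rules then gives the \emph{reverse} ordering: there $W_{sel}$ is stochastically smaller than a single port. Because the high-SNR outage is driven by $\mathbb{E}[W_{sel}^{-2m_g}]$, nothing prevents MBS from exceeding (\ref{eqn:Pout_BcS_UB}). Only $\text{MBS}_{\text{LB}}$ is immediate.
(ii) For MCGS the same cross term makes the score decreasing in small $|h_{f,k}|$ whenever $\cos(\Delta\phi)<0$. Averaging the score over $\Delta\phi$ does not help, because the argmax is taken per realization. $W_{sel}$ is therefore a mixture of an upward- and a downward-biased law, and first-order dominance is simply not implied.
(iii) In MCGS and MBS, $|h_g|$ enters the score non-multiplicatively, so $k^*$ depends on $G$. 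The representation $P_{out,BcS}=\mathbb{E}[\psi(W_{sel})]$ with the unconditional $F_{G^2}$ in your Step 1 is thus not justified for these rules.

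Several parts do go through. RS gives equality, and $\text{MBS}_{\text{LB}}$ is trivial. Your TAPS argument is correct and neatly bypasses Step 1, though it needs $\gamma_{\min}\ge\gamma_{th}$ (e.g.\ $\gamma_{\min}=\gamma_{th}$).

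For JBS, $|h_g|^2$ is a common factor, so $k^*$ is independent of $G$ and your sketch can be completed. Exchangeability alone is not the right tool, though. Condition on the common reference component $x_{0l},y_{0l}$ of the correlated port model and on the direct-link vector, so that $|h_{f,1}|,\dots,|h_{f,N}|$ are i.i.d. with survival function $\bar F$. Then $\{k^*=k\}=\{|h_{f,k}|>T_k\}$ with $T_k$ independent of $|h_{f,k}|$. The bound $\mathbb{P}(X>\max(s,T))=\mathbb{E}[\min(\bar F(s),\bar F(T))]\ge\bar F(s)\,\mathbb{P}(X>T)$, summed over $k$, gives $\mathbb{P}(W_{sel}>s)\ge\bar F(s)$.

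So you have two options. Either restrict your universal claim to RS, $\text{MBS}_{\text{LB}}$, JBS and TAPS. Or adopt the paper's reading of the UB as the zero-diversity (fully correlated, fixed-antenna) outage, in which case your Step 3 alone is the whole proof.
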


\begin{figure*}[t]
\begin{equation}
P_{out,BcS}^{UB}(\gamma_{th}) = \int_{0}^{\infty} \underbrace{\frac{\gamma\left(m_g, \frac{m_g}{\Omega_g}  \frac{\gamma_{th}}{w^2 \bar{\gamma}_{BcS}}\right)}{\Gamma(m_g)}}_{\text{uplink Outage given } w} \cdot \underbrace{\left[ \frac{2 m_f^{m_f}}{\Gamma(m_f) \Omega_f^{m_f}} w^{2m_f-1} e^{-\frac{m_f}{\Omega_f} w^2} \right]}_{\text{forward link pdf}} \, dw.
\label{eqn:Pout_BcS_UB}
\end{equation}
\end{figure*}

\textit{Proof:}
UB exists under full correlation among FAS ports. Thus, the pdf of the forward link amplitude $w=|h_f|$ simplifies to Nakagami-m distribution: 
 
\begin{equation}
f_{|h_f|}(w) = \frac{2 m_f^{m_f}}{\Gamma(m_f) \Omega_f^{m_f}} w^{2m_f-1} e^{( -\frac{m_f}{\Omega_f} w^2 )},
\label{eqn:f_hf}
\end{equation}
and the cdf of the uplink power $G^2$ is Gamma distribution:
\begin{equation}
F_{G^2}(g) = \frac{\gamma\left(m_g, \frac{m_g}{\Omega_g} g\right)}{\Gamma(m_g)}.
\label{eqn:cdfG}
\end{equation} 
Substituting (\ref{eqn:cdfG}) and (\ref{eqn:f_hf}) into (\ref{eqn:P_out_BcS}) ends the proof. \\

\begin{theorem}
Global LB of BcS signal SNR outage probability is defined in (\ref{eqn:Pout_BcS_LB}).
\end{theorem}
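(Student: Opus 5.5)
Mirroring the proof of the UB theorem, the lower bound should come from the most favourable port diversity. The UB corresponds to fully correlated ports, where the selected forward amplitude is one Nakagami-$m_f$ variable. The LB should instead come from the ideal best-case selection, namely the $\text{MBS}_{\text{LB}}$ rule in (\ref{eqn:MBS_LB_forwrd}). That rule picks $k^*=\arg\max_k |h_{f,k}|^2$ over $N$ ports that we treat as independent ($\mu\to 0$). For any symbiotic strategy (MCGS, MBS, JBS, TAPS, RS), the selected forward amplitude satisfies $W_{sel}\le \max_k |h_{f,k}|$ pathwise. Since the uplink $G$ is common to all ports, $\gamma_{BcS}$ is pathwise no larger than under $\text{MBS}_{\text{LB}}$. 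Hence the outage of every strategy is lower bounded by the outage of $\text{MBS}_{\text{LB}}$.

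The steps are as follows.
\begin{enumerate}
\item Establish the pathwise dominance above, so that $P_{out,BcS}\ge \mathbb{P}(\bar{\gamma}_{BcS}G^2\max_k|h_{f,k}|^2<\gamma_{th})$.
\item Argue that correlation can only reduce the effective diversity of the maximum, so independent ports give the smallest outage. For the correlated model (2), I would try to invoke a positive-dependence (Slepian-type) argument. Failing that, I would simply adopt independence as the idealised global bound, in the same spirit as the $[\cdot]^N$ construction used for the composite LB in (\ref{eqn:C_LB}).
\item Under independence, $F_{W_{sel}^2}(x)=\big[\gamma(m_f,\tfrac{m_f}{\Omega_f}x)/\Gamma(m_f)\big]^N$.
\item Plug into the conditioning formula (\ref{eqn:P_out_BcS}). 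It is convenient to condition on $G$ instead, since $W_{sel}$ now has a simple CDF. This gives
\begin{equation}
P_{out,BcS}^{LB}=\int_0^\infty \left[\frac{\gamma\left(m_f,\frac{m_f}{\Omega_f}\frac{\gamma_{th}}{g^2\bar{\gamma}_{BcS}}\right)}{\Gamma(m_f)}\right]^N f_{|h_g|}(g)\,dg ,
\end{equation}
with $f_{|h_g|}$ the Nakagami-$m_g$ pdf of the same form as (\ref{eqn:f_hf}).
\item Equivalently, keep the form of (\ref{eqn:Pout_BcS_UB}), with the uplink outage term $\gamma(m_g,\cdot)/\Gamma(m_g)$ multiplied by the pdf of the maximum, $f_{W_{sel}}(w)=N\,[\gamma(m_f,\tfrac{m_f}{\Omega_f}w^2)/\Gamma(m_f)]^{N-1}f_{|h_f|}(w)$. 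I would present whichever form matches (\ref{eqn:Pout_BcS_LB}).
\end{enumerate}

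The main obstacle is step 2: rigorously justifying that independence, rather than some intermediate correlation, minimises the outage of the maximum under the shared-reference correlated model. I would first try conditioning on the reference variables $x_{0l},y_{0l}$. Given these, the ports are conditionally i.i.d. noncentral variables, and a convexity (Jensen) argument on $p\mapsto p^N$ should show that averaging over the common component increases $\mathbb{E}[F^N]$ relative to the independent case. If that fails, I would state the bound as the standard idealised diversity bound used throughout the paper.
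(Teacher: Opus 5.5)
Your proposal is correct and follows the paper's route. The paper also takes the LB to be the outage of the best forward port, modelled as the maximum of $N$ i.i.d.\ Nakagami-$m_f$ amplitudes with pdf $N[F_{|h_f|}(w)]^{N-1}f_{|h_f|}(w)$, and inserts it together with the Gamma CDF of $G^2$ into (\ref{eqn:P_out_BcS}), which is exactly your step 5. Your two extra steps supply justification the paper leaves implicit for why this is a bound for every strategy and every correlation level: the pathwise dominance $W_{sel}\le\max_k|h_{f,k}|$, and the Jensen argument conditioned on the shared reference $\{x_{0l},y_{0l}\}$. The Jensen step does go through, since given the reference the ports are conditionally i.i.d.\ with Nakagami-$m$ marginals, so $\mathbb{E}[F(w\,|\,\text{ref})^N]\ge F_{|h_f|}(w)^N$.
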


\begin{figure*}[t]
\begin{equation}
P_{out,BcS}^{LB} (\gamma_{th}) = \int_{0}^{\infty} \frac{\gamma\left(m_g, \frac{m_g}{\Omega_g}  \frac{\gamma_{th}}{w^2 \bar{\gamma}_{BcS}}\right)}{\Gamma(m_g)} \cdot \left( N \left[ \frac{\gamma(m_f, \frac{m_f}{\Omega_f} w^2)}{\Gamma(m_f)} \right]^{N-1} \cdot \frac{2 m_f^{m_f}}{\Gamma(m_f) \Omega_f^{m_f}} w^{2m_f-1} e^{( -\frac{m_f}{\Omega_f} w^2 )}\right) dw.
\label{eqn:Pout_BcS_LB}
\end{equation} 
\end{figure*}

\textit{Proof:}
LB associates with the maximization of the forward link amplitude $w = |h_{f,k}|$ across $\{k \in N\}$ ports,
given the cdf of a single Nakagami-m amplitude $|h_f|$ is,  
\begin{equation}
F_{|h_f|}(w) = \frac{\gamma(m_f, \frac{m_f}{\Omega_f} w^2)}{\Gamma(m_f)}.
\label{eqn:cdf_hf}
\end{equation}
Hence, for N ports the pdf of the maximum forward link is
\begin{equation}
f_{w_{max}}^{LB}(w) = \frac{d}{dw} [F_{|h_f|}(w)]^N = N \cdot [F_{|h_f|}(w)]^{N-1} \cdot f_{|h_f|}(w),
\label{eqn:fymax}
\end{equation}
by substituting (\ref{eqn:cdf_hf}) into (\ref{eqn:fymax}), the pdf of maximum forward link is estimated, then substituting along with (\ref{eqn:cdfG}) in (\ref{eqn:P_out_BcS}) ends the proof.

\begin{theorem}
UB $\&$ LB asymptotic high SNR of BcS signal outage probability in the integral form are presented in (\ref{eqn:outage_UB_asym}) and (\ref{eqn:outage_LB_asym}). 
\end{theorem}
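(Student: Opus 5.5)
The plan is to start from the exact integral forms already established in (\ref{eqn:Pout_BcS_UB}) and (\ref{eqn:Pout_BcS_LB}), and let $\bar{\gamma}_{BcS}\to\infty$. The first step isolates the conditional uplink outage term. For fixed $w>0$, the argument of the lower incomplete gamma function, $\frac{m_g}{\Omega_g}\frac{\gamma_{th}}{w^2\bar{\gamma}_{BcS}}$, tends to zero. I will therefore use the small-argument expansion $\gamma(s,x)\approx x^{s}/s$, which gives
\begin{equation*}
\frac{\gamma\left(m_g,\frac{m_g}{\Omega_g}\frac{\gamma_{th}}{w^2\bar{\gamma}_{BcS}}\right)}{\Gamma(m_g)}\approx\frac{1}{\Gamma(m_g+1)}\left(\frac{m_g\gamma_{th}}{\Omega_g\bar{\gamma}_{BcS}}\right)^{m_g}w^{-2m_g}.
\end{equation*}
This factor then moves outside the integral as $(\gamma_{th}/\bar{\gamma}_{BcS})^{m_g}$.

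The second step treats the forward-link density in the same spirit as the proof of the composite asymptotic result, where (\ref{eqn:A}) was used. In the small-$w$ regime the Nakagami density reduces to $\frac{2m_f^{m_f}}{\Gamma(m_f)\Omega_f^{m_f}}w^{2m_f-1}$. For the LB, the order-statistic factor additionally satisfies $[\gamma(m_f,\frac{m_f}{\Omega_f}w^2)/\Gamma(m_f)]^{N-1}\approx[(m_fw^2/\Omega_f)^{m_f}/\Gamma(m_f+1)]^{N-1}$. To be consistent, I would apply the change of variables $w=\bar{w}\sqrt{\gamma_{th}/\bar{\gamma}_{BcS}}$ to both bounds, as was done for $u,v$ in the composite proof. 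This converts the integrand into a product of powers of $\gamma_{th}/\bar{\gamma}_{BcS}$ times a normalized integral in $\bar{w}$.

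Collecting exponents should give $(\gamma_{th}/\bar{\gamma}_{BcS})^{\min(m_g,m_f)}$ for the UB, because the fully correlated case has effectively no port diversity. For the LB it should give $(\gamma_{th}/\bar{\gamma}_{BcS})^{\min(m_g,Nm_f)}$, since selecting the strongest of $N$ ports multiplies the forward-link exponent by $N$. The coefficient in front of each power then serves as the coding gain. I would present the results (\ref{eqn:outage_UB_asym}) and (\ref{eqn:outage_LB_asym}) in this integral form: a constant, times the SNR power, times a remaining one-dimensional integral in $w$ that uses the original (unapproximated) forward density for the UB and the max-order density for the LB.

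The main obstacle is that the $w$-integral does not converge uniformly near $w\to0$ when $m_g$ exceeds the forward exponent. There, $w^{-2m_g}w^{2m_f-1}$ is non-integrable, so the expansion of $\gamma(\cdot)$ cannot be applied naively over all of $[0,\infty)$. I would handle this by splitting the integral at $w_0=\sqrt{\gamma_{th}/\bar{\gamma}_{BcS}}$. On $[0,w_0]$ the incomplete-gamma factor is bounded by one, so that piece scales as $F_{W}(w_0)$. On $[w_0,\infty)$ the small-argument expansion is valid. The dominant contribution is then whichever exponent is smaller, and the two cases $m_g<m_f$ (or $m_g<Nm_f$ for the LB) and the reverse case are treated separately, with the equal-exponent case producing a logarithmic factor that I would only note.
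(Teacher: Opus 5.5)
\textbf{Verdict: there is a genuine gap.} Your first step is the paper's whole proof, but your second step fails, and the justification that actually makes the stated formulas exact is missing.

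\textbf{Where the second step fails.} The paper inserts $\gamma(s,y)\approx y^s/s$ into the uplink factor of (\ref{eqn:Pout_BcS_UB}) and (\ref{eqn:Pout_BcS_LB}). It pulls $\mathcal{C}\gamma_{th}^{m_g}$ out and keeps the \emph{exact} densities $f_{|h_f|}$ and $f_{W_{max}}$ inside the integral; there is no second step. Your two approximations hold on different ranges of $w$. The incomplete-gamma expansion needs $w\gg\sqrt{\gamma_{th}/\bar{\gamma}_{BcS}}$, while replacing the Nakagami density (and the order-statistic factor) by its leading power needs $w\ll 1$. Using both over $[0,\infty)$ leaves a pure power $w^{2(m_f-m_g)-1}$ (resp. $w^{2(Nm_f-m_g)-1}$), whose integral diverges for every parameter choice. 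The rescaling $w=\bar{w}\sqrt{\gamma_{th}/\bar{\gamma}_{BcS}}$ does not change that. That rescaling is the right scaling only when the forward link is the weaker one ($m_f<m_g$, resp. $Nm_f<m_g$), with the incomplete gamma kept exact. That is precisely the regime in which (\ref{eqn:outage_UB_asym})--(\ref{eqn:outage_LB_asym}) fail. So the second step cannot produce the theorem's coefficient. Your closing plan (unapproximated densities times $\gamma_{th}^{m_g}$) does not follow from it. Moreover, an exponent $\min(m_g,m_f)$ is not the $\gamma_{th}^{m_g}$ of the statement unless $m_g<m_f$.

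\textbf{The missing idea: a justified interchange of limit and integral.} Let $x=\gamma_{th}/\bar{\gamma}_{BcS}$. Since $\gamma(s,y)/y^s=\int_0^1u^{s-1}e^{-yu}\,du$ increases to $1/s$ as $y\downarrow 0$, the normalized integrand $x^{-m_g}F_{G^2}(x/w^2)f(w)$ increases monotonically to $\frac{1}{\Gamma(m_g+1)}(m_g/\Omega_g)^{m_g}w^{-2m_g}f(w)$. Monotone convergence then gives (\ref{eqn:outage_UB_asym}) and (\ref{eqn:outage_LB_asym}) exactly iff $\int_0^\infty w^{-2m_g}f(w)\,dw<\infty$. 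This means $m_f>m_g$ for the UB and $Nm_f>m_g$ for the LB; otherwise $x^{-m_g}P_{out,BcS}\to\infty$. For the UB the integral is $\mathcal{M}_{UB}=\frac{\Gamma(m_f-m_g)}{\Gamma(m_f)}(m_f/\Omega_f)^{m_g}$, which visibly blows up as $m_f\downarrow m_g$.

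\textbf{What you got right.} Your divergence observation is correct and exposes a hypothesis the paper's formal substitution never checks; it does hold for the simulated values $m_f=2$, $m_g=1$. Your general orders $\min(m_g,m_f)$ and $\min(m_g,Nm_f)$ agree with the paper's own $m_{min}$ in the composite analysis. They also show that the Remark's claim that $m_g$ is the diversity order regardless of $N$ needs the same condition.

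\textbf{What the proof should be.} It should be your first step plus this convergence argument, with the theorem stated under $m_f>m_g$ (resp. $Nm_f>m_g$). The case split you sketch belongs to the complementary regime. Even there, splitting at $w_0=\sqrt{x}$ is loose: the incomplete-gamma argument equals $m_g/\Omega_g$ at $w_0$, which is not small. On $[w_0,\infty)$ you therefore get only two-sided bounds up to constants. That is enough for the exponent but not for an exact coefficient.
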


\begin{figure*}
\begin{equation}
P_{out, BcS}^{UB, \infty} \approx \left( \mathcal{C} \int_{0}^{\infty} w^{-2m_g}\left[ \frac{2 m_f^{m_f}}{\Gamma(m_f) \Omega_f^{m_f}} w^{2m_f-1} e^{-\frac{m_f}{\Omega_f} w^2} \right] \, dw \right) \cdot \gamma_{th}^{m_g},
\label{eqn:outage_UB_asym}
\end{equation}
\begin{equation}
P_{out, BcS}^{LB, \infty} \approx \left( \mathcal{C} \int_{0}^{\infty} w^{-2m_g} \left( N \left[ \frac{\gamma(m_f, \frac{m_f}{\Omega_f} w^2)}{\Gamma(m_f)} \right]^{N-1} \cdot \frac{2 m_f^{m_f}}{\Gamma(m_f) \Omega_f^{m_f}} w^{2m_f-1} \exp\left( -\frac{m_f}{\Omega_f} w^2 \right) \right) \, dw \right) \cdot \gamma_{th}^{m_g}.
\label{eqn:outage_LB_asym}
\end{equation}
\end{figure*}

\textit{Proof:} In the high SNR domain, as $\gamma_{th} \to 0$, the asymptotic behaviour is dominated by the cdf near zero. Hence, derived using the BcS outage UB and LB in (\ref{eqn:Pout_BcS_UB}) and (\ref{eqn:Pout_BcS_LB}), respectively.
 
Given the lower incomplete gamma function $\gamma(s, x) = \int_0^x t^{s-1} e^{-t} dt$ can be approximated for small arguments ($x \to 0$) as \cite{Jameson2016}:
\begin{equation}
\gamma(s, x) \approx \frac{x^s}{s},
\end{equation} 
consequently, the cdf term for the uplink outage $F_{G^2}$ in (\ref{eqn:P_out_BcS}), which also appears in (\ref{eqn:Pout_BcS_UB}) and (\ref{eqn:Pout_BcS_LB}), approximates to (\ref{eqn:gamma_high}).\\

\begin{figure*}[t]
\begin{equation}
\frac{\gamma\left(m_g, \frac{m_g \gamma_{th}}{\Omega_g w^2 \bar{\gamma}_{BcS}}\right)}{\Gamma(m_g)} \approx \frac{1}{\Gamma(m_g) \cdot m_g} \left( \frac{m_g \gamma_{th}}{\Omega_g w^2 \bar{\gamma}_{BcS}} \right)^{m_g} = \frac{1}{\Gamma(m_g+1)} \left( \frac{m_g}{\Omega_g \bar{\gamma}_{BcS}} \right)^{m_g} \gamma_{th}^{m_g} w^{-2m_g}.
\label{eqn:gamma_high}
\end{equation}
\hrulefill 
\end{figure*}

Let the constant term be $\mathcal{C} = \frac{1}{\Gamma(m_g+1)} \left( \frac{m_g}{\Omega_g \bar{\gamma}_{BcS}} \right)^{m_g}$. By substituting in (\ref{eqn:P_out_BcS}) and rearranging:
\begin{equation}
P_{out, BcS}^{UB, \infty} \approx \left( \mathcal{C} \int_{0}^{\infty} w^{-2m_g} f_{|h_f|}(w) \, dw \right) \cdot \gamma_{th}^{m_g}.
\end{equation}
Similarly, by substituting in the BcS outage LB  (\ref{eqn:Pout_BcS_LB}), 
\begin{equation}
P_{out, BcS}^{LB, \infty} \approx \left( \mathcal{C} \int_{0}^{\infty} w^{-2m_g} f_{W_{max}}(w) \, dw \right) \cdot \gamma_{th}^{m_g},
\end{equation}
substituting with $f_{|h_f|}(w)$ and $f_{W_{max}}(w)$ ends the proof.

\textbf{Remark:} \textit{The Nakagami ($m_g$) coefficient of the single uplink A-IoT channel $(h_g)$ to the CH represents the diversity gain of both UB and LB equations, as it's the power of threshold ($\gamma_{th}^{m_g}$) in both equations. Hence, it consists the bottleneck of the system no matter what was the number of FAS ports ($N$). At the same time, maximizing the BcS link with FAS provided an asymptotic coding gain of over the single fixed antenna (i.e. UB) gain, which guarantees a smaller outage probability value.}

\begin{corollary}
 The theoretical possible BcS asymptotic coding gain (in dB) provided by the symbiotic FAS Selection strategies, approximated by:`
\begin{equation}
G_{code} = 10 \log_{10} \left( \frac{P_{out, UB}^{\infty}}{P_{out, LB}^{\infty}} \right) = 10 \log_{10} \left( \frac{\mathcal{M}_{UB}}{\mathcal{M}_{LB}} \right),
\end{equation}
where $\mathcal{M}_{LB} = \int_{0}^{\infty} w^{-2m_g} f_{W_{max}}(w) dw$ is the symbiotic FAS gain and $\mathcal{M}_{UB} = \int_{0}^{\infty} w^{-2m_g} f_{|h_f|}(w) dw$ is the single antenna gain.
\end{corollary}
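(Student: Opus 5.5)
The plan is to obtain the corollary directly from the two asymptotic expressions of the preceding theorem, (\ref{eqn:outage_UB_asym}) and (\ref{eqn:outage_LB_asym}). Both are written as a common prefactor $\mathcal{C}\,\gamma_{th}^{m_g}$ times an integral. In (\ref{eqn:outage_UB_asym}) the integral is $\mathcal{M}_{UB}=\int_0^\infty w^{-2m_g} f_{|h_f|}(w)\,dw$, with $f_{|h_f|}$ the single Nakagami-$m_f$ pdf (\ref{eqn:f_hf}). In (\ref{eqn:outage_LB_asym}) the integral is $\mathcal{M}_{LB}=\int_0^\infty w^{-2m_g} f_{W_{max}}(w)\,dw$, with $f_{W_{max}}$ the order-statistic pdf (\ref{eqn:fymax}).

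First, I would define the asymptotic coding gain as the horizontal SNR shift between the two outage curves. Both curves have the same diversity order $m_g$, as stated in the Remark, so on a log--log plot they are parallel straight lines. The shift in dB is therefore $10\log_{10}$ of the ratio of their ordinates at any fixed $\bar\gamma_{BcS}$.

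Second, I would form the ratio $P^{\infty}_{out,UB}/P^{\infty}_{out,LB}$. The factor $\mathcal{C}=\frac{1}{\Gamma(m_g+1)}\left(\frac{m_g}{\Omega_g\bar\gamma_{BcS}}\right)^{m_g}$ and the factor $\gamma_{th}^{m_g}$ are identical in numerator and denominator, so they cancel exactly. This leaves $\mathcal{M}_{UB}/\mathcal{M}_{LB}$, which is the stated expression.

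Strictly, a horizontal SNR shift between curves of slope $m_g$ equals $\frac{10}{m_g}\log_{10}(\mathcal{M}_{UB}/\mathcal{M}_{LB})$. The statement, as its wording "approximated by" indicates, measures the gain as the vertical outage ratio instead. I would note this explicitly and adopt the statement's convention, namely the outage-ratio gain.

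The step I expect to be the main obstacle is showing that both integrals are finite and that their ratio is at least one, so that $G_{code}\ge 0$ is meaningful.

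Finiteness requires examining the behaviour near $w\to0$. For $\mathcal{M}_{UB}$ the integrand behaves like $w^{2m_f-2m_g-1}$, which is integrable only if $m_f>m_g$. For $\mathcal{M}_{LB}$ the factor $[F_{|h_f|}(w)]^{N-1}\sim w^{2m_f(N-1)}$ improves this to $w^{2Nm_f-2m_g-1}$, which is integrable when $Nm_f>m_g$. I would state these as the validity conditions. If $m_f\le m_g$, the small-argument expansion of $\gamma(m_g,\cdot)$ is not uniform in $w$, and the true diversity order becomes $\min(m_g,m_f)$ for the fixed antenna. Here the corollary should be read under the condition $m_f>m_g$, equivalently with $\gamma_{th}^{m_g}$ as the common dominant exponent.

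For the ordering $\mathcal{M}_{LB}\le\mathcal{M}_{UB}$, I would use that $W_{max}$ stochastically dominates $|h_f|$, since $F_{|h_f|}^N\le F_{|h_f|}$. I would combine this with the fact that $w\mapsto w^{-2m_g}$ is decreasing. The expectation of a decreasing function is smaller under the dominating law, which gives the inequality and hence a nonnegative coding gain.
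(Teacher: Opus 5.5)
Your proposal is correct and follows the paper's argument, which is left implicit there: the corollary comes from dividing the two asymptotic expressions of the preceding theorem, so that the common factor $\mathcal{C}\,\gamma_{th}^{m_g}$ cancels and $\mathcal{M}_{UB}/\mathcal{M}_{LB}$ remains. Your extra points are valid refinements the paper does not state: the integrals are finite only when $m_f>m_g$ (resp.\ $Nm_f>m_g$); the stochastic dominance of $W_{max}$ gives $G_{code}\ge 0$; and a true horizontal SNR shift carries a $1/m_g$ factor, which makes no difference for the simulated $m_g=1$.
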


\subsection{Coexistence outage probability (COP)}
By isolating the exact mathematical global UB and LB for both $P_{out, C}$ and $P_{out, BcS}$ in the preceding subsections, we have inherently mapped the absolute geometric envelope bounds of the system's coexistence capability.
\begin{theorem}
\label{theo:coexistence}
The joint COP is tightly bounded by the individual composite and BcS performance metrics as follows:
\begin{itemize}
\item{LB governed by the dominant worst performing link:
\begin{equation}
P_{out, coex}^{Z_k^*} \ge \max\left(P^{Z_k^*}_{out, C}, P^{Z_k^*}_{out, BcS}\right).   
\end{equation}}
\item{UB represents the worst case mutually exclusive scenario:
\begin{equation}
 P_{out, coex}^{Z_k^*} \le \min\left(1, P^{Z_k^*}_{out, C} + P^{Z_k^*}_{out, BcS}\right),
\end{equation}}
\end{itemize}
where ${Z_k^*}$ represents the symbiotic FAS selection criteria.
\end{theorem}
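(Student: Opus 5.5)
The plan is to express the coexistence outage as the probability of a union of two events and apply the elementary Fréchet-type bounds. Fix a selection criterion $Z_k^*$ (MCGS, MBS, JBS, TAPS or RS) and let $k^*$ be the port it selects in a given channel realization. Define the composite (energy harvesting) outage event $\mathcal{E}_C = \{\gamma_{CH,k^*}^{inst} < \gamma_{th}\}$ and the backscatter outage event $\mathcal{E}_B = \{\bar{\gamma}_{BcS}|h_{f,k^*}|^2|h_g|^2 < \gamma^{D_d}_{th}\}$. Both events are evaluated on the \emph{same} selected port $k^*$. Coexistence fails whenever either symbiotic task fails, so $P_{out,coex}^{Z_k^*} = \mathbb{P}(\mathcal{E}_C \cup \mathcal{E}_B)$, while $P^{Z_k^*}_{out,C} = \mathbb{P}(\mathcal{E}_C)$ and $P^{Z_k^*}_{out,BcS} = \mathbb{P}(\mathcal{E}_B)$. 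The first step is to state this definition explicitly, since the theorem depends only on it.

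For the lower bound, monotonicity of probability gives $\mathcal{E}_C \subseteq \mathcal{E}_C\cup\mathcal{E}_B$ and $\mathcal{E}_B \subseteq \mathcal{E}_C\cup\mathcal{E}_B$. Hence the union probability is at least each marginal, and therefore at least their maximum. This bound is attained when one event contains the other, for example when the BcS outage implies composite outage. That is why it is "governed by the dominant worst performing link."

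For the upper bound, inclusion--exclusion gives
\begin{equation*}
\mathbb{P}(\mathcal{E}_C\cup\mathcal{E}_B)=\mathbb{P}(\mathcal{E}_C)+\mathbb{P}(\mathcal{E}_B)-\mathbb{P}(\mathcal{E}_C\cap\mathcal{E}_B)\le \mathbb{P}(\mathcal{E}_C)+\mathbb{P}(\mathcal{E}_B),
\end{equation*}
and the trivial bound $\mathbb{P}(\cdot)\le 1$ yields the $\min(1,\cdot)$. Equality holds exactly when $\mathcal{E}_C$ and $\mathcal{E}_B$ are mutually exclusive, which matches the "worst case mutually exclusive scenario" wording.

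The main subtlety is not the inequalities, which are generic. It is that $\mathcal{E}_C$ and $\mathcal{E}_B$ are statistically dependent: they share $h_{f,k^*}$ and $h_g$, and the selection rule couples them. The intersection term $\mathbb{P}(\mathcal{E}_C\cap\mathcal{E}_B)$ therefore has no tractable closed form, which is precisely why only the bounds are stated. I would stress that the argument requires no independence assumption and holds for any selection strategy.

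Finally, I would note how these bounds connect to the earlier results. Each marginal can itself be sandwiched by the global bounds derived above, the composite UB/LB and the BcS UB/LB. Substituting the lower bounds into $\max(\cdot)$ and the upper bounds into $\min(1,\cdot+\cdot)$ gives strategy-independent envelopes for the COP, and these envelopes can be computed directly from the earlier theorems.
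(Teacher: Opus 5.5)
Your proof is correct and is essentially the paper's: the paper's appendix writes the COP as the probability of the union of the EH and BcS outage events on the common selected port $k^*$ and invokes the Fr\'echet--Hoeffding inequalities. That is exactly your monotonicity plus inclusion--exclusion argument, and like yours it needs no independence assumption. One caution on your closing remark about strategy-independent envelopes, which goes beyond what the theorem requires: the lower envelope $\max(P^{LB}_{out,C},P^{LB}_{out,BcS})$ is sound, but the single-port ``global'' UBs cannot dominate the marginals of an arbitrary selection rule (e.g.\ one that activates the weakest forward port), so the upper envelope holds only for strategies to which those UBs actually apply.
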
 

\textit{Proof:} Find in Appendix \ref{app:cop}.\\

This decouples the fundamental limits of the UAV physical channel from the specific operational weights assigned to data or energy as needed by the IoT network.

\section{ Joint optimization of COP}

In this section, the reliability of the symbiotic assisted FAS-UAV network is formulated as a global joint optimization framework in (\ref{eqn:global_opt}). The joint optimization depends on two key factors; the UAV optimal placement $\mathbf{p}_{\text{UAV}}$ which effects the large scale path loss fading and FAS port selection matrix $\mathcal{K}^*$ that balances the small-scale fast fading effect through tackling the trade-off multi-objective optimization of composite signal outage $\mathbb{P}_{out, C}$ ($J_1$) and BcS signal outage $\mathbb{P}_{out, BcS}$ ($J_2$) within the symbiotic network. Thus, this coupled global framework can be decoupled without loss of optimality into two sequential sub-problems; first the microscopic realization-level ($M$) port matrix ($\mathcal{K}^*$) optimization where we derive the absolute non-convex Pareto-front boundary for any arbitrary UAV spatial coordinate, then the macroscopic spatial UAV placement $\mathbf{p}_{\text{UAV}}$ optimization where we determine the optimal 2D coordinates $(x_U^*, y_U^*)$ for a fixed UAV altitude $H$ that minimize the expected multi-objective COP profile across the network's geographic layout.

\begin{subequations}
\label{eqn:global_opt}
\begin{align}
\min_{\mathbf{p}_{\text{UAV}}, \mathcal{K}^*} \quad \left\{ J_1(\mathbf{p}_{\text{UAV}}, \mathcal{K^*}), \; J_2(\mathbf{p}_{\text{UAV}}, \mathcal{K^*}) \right\}\\ \text{s.t.} \quad x_{min} \le x_U \le x_{max}, \\
\quad y_{min} \le y_U \le y_{max}, \\k_m \in \mathcal{N}, \quad \forall m \in \{1, 2, \dots, M\},
\end{align}
\end{subequations}





\subsection{Pareto front multi-objective optimization}
For a fixed large-scale channel  state, standard symbiotic multi-objective frameworks frequently deploy the linear weighted sum method \cite{Li2026FASR} \cite{Wang2022}. Despite its computational simplicity, linear scalarization is incapable of tracing the concave portions of FAS-SR's non-convex Pareto front boundary. Therefore, we deploy the weighted Tchebycheff scalarization framework: \begin{subequations}
\begin{align}
&\min_{\mathcal{K^*}} \max \left\{ \lambda \left( J_1(\mathcal{K^*}) - J_1^* \right), \; (1 - \lambda) \left( J_2(\mathcal{K^*}) - J_2^* \right) \right\}\\&\text{s.t.} \quad k_m \in \mathcal{N}, \quad \forall m \in \{1, 2, \dots, M\},
\end{align}
\end{subequations}
where $J_1^*$ and $J_2^*$ represent the respective ideal utopian outage values. However, to establish a definitive baseline that directly reflects the maximum possible reliability region, the $\epsilon$-constraint method is introduced: 
\begin{subequations}
\begin{align}
&\min_{\mathcal{K^*}} \quad J_1(\mathcal{K^*})\\
&\text{s.t.} \quad J_2(\mathcal{K^*}) \le \epsilon,\\ 
&k_m \in \mathcal{N}, \quad \forall m \in \{1, 2, \dots, M\}, 
\end{align}
\end{subequations}
where $\epsilon \in [0, 1]$ represents the maximum tolerable backscatter outage probability. By sweeping $\epsilon$ over $G_{\epsilon}$ distinct threshold values, this formulation yields a highly accurate Pareto-front curve for the reliability region. Nevertheless, solving this globally coupled NP-hard problem across all $M$ channel realizations simultaneously has a complexity of $ \mathcal{O}(G_{\epsilon} \cdot N^M)$, which creates a computational bottleneck that is unviable for real-time UAV operations. By contrast, our proposed symbiotic FAS selection strategies has a linear $\mathcal{O}(MN)$ complexity as it completely avoids the global coupling and grid search computations.
\subsection{ Optimal symbiotic FAS-UAV placement}
The ultimate UAV placement optimization problem to minimize the joint coexistence outage probability ($P_{out,coex}$) under the symbiotic FAS port matrix $\mathcal{K^*}$ as:
\begin{subequations}
\label{eqn:P1} 
\begin{align}
\quad \min_{\mathbf{p}_{\text{UAV}}} \quad \max \left( \mathbb{P}_{out, C}(\mathbf{p}_{\text{UAV}}), \; \mathbb{P}_{out, BcS}(\mathbf{p}_{\text{UAV}}) | \mathcal{K^*} \right)  \label{eqn:P1_min}   \\ 
 \text{s.t.} \quad x_{min} \le x_U \le x_{max}, \quad y_{min} \le y_U \le y_{max},   \label{eqn:min_xy} 
\end{align}
\end{subequations}
where ${Z_k^*}$ represents the symbiotic FAS selection criteria. Since the objective function is highly non-convex due to the integration of Nakagami-$m$ fading and FAS correlation bounds, a 2D exhaustive grid search is the most robust method to find the global optimum and visualize the spatial effects.

 \section{Simulation Results}

 In Monte Carlo simulations, a UAV is deployed at an altitude of 100 m such that $\mathbf{p}_{\text{UAV}} = [0, 0, 100]^T \text{ m}$, a sensor is positioned near the ground level at $\mathbf{p}_{\text{Tag}} = [20, 20, 1.5]^T \text{ m}$, and the CH acts as a stationary anchor node at $\mathbf{p}_{\text{CH}} = [50, 0, 10]^T \text{ m}$. The wireless channels are modelled with Nakagami fading factor $m=2$ for both the direct link $h_D$ and BcS forward link $h_f$ while $m=1$ for BcS backward link $h_g$, a path loss exponent of $\alpha = 2.5$ and the receiver noise floor is set to $\sigma ^{2}=-90\text{ dBm}$ \cite{Abuzgaia2026FAS}. The received power threshold for EH activation is set to $\gamma^{pr}_{th} = 60 \text{dB}$\footnote{By substitution in $(P_{r}^{min}\text{\ (dBm)}=\gamma _{th}^{pr}\text{\ (dB)}+\sigma ^{2}\text{\ (dBm)}-10\log _{10}(\eta ))$ for an efficiency of $(\eta = 0.8)$, $\gamma^{pr}_{th}$ corresponds to an absolute EH circuit sensitivity of $ P_{r}^{min}\approx -29\text{dBm}$. Which matches modern ultra-low-power RF energy harvesting circuits threshold range; $(-20  \, \text{to} -50)\, \text{dBm}$ \cite{Chen2024,Yin2025}.}, whereas the BcS data decoding SNR threshold is set to $\gamma^{Dd}_{th} = 0\text{ dB}$ and BcS reflection efficiency $\vartheta=0.6$ \cite{Ali2026}. \\

 Fig.\ref{fig:out_theo_corr}(a) demonstrates the accuracy of the derived outage probability bounds and asymptotic equation of the composite signal normalized received power at the CH along the outage curves of the novel symbiotic FAS strategies. The simulated fixed single antenna curve was in identical match with the UB curve as was expected. MCGS strategy showed the best performance although it has a 400 shift in the linear scale from the LB curve due to the correlation effect of the FAS limited length $0.2\lambda$. On the other hand, TAPS demonstrated identical performance with MCGS for lower values of outage till approximately $10^{-3}$ the outage curve bent and JBS demonstrate enhanced value. This is explained by the heuristic optimization nature of TAPS compared with the other symbiotic strategies, as TAPS chooses the activated port based on priority   of fulfilling the BcS link threshold which doesn't always guarantee that the composite link outage would be optimum. While, Fig.\ref{fig:out_theo_corr}(b) proved the derivation exact match of UB and LB theoretical with the asymptotic equations of the BcS signal SNR outage performance. The MCGS curve was identical to the simulated single fixed antenna system and the UB curve, while the TAPS strategy showed the best performance which emphasise the importance of symbiotic FAS techniques that balances the BcS data detection with the composite signal energy harvesting. A total coding gain of almost 6.2dB was demonstrated although not all of this coding gain was harnessed by TAPS strategy due to the high correlation effect of FAS limited length $0.2\lambda$.\\

\begin{figure}[t]
 \centering
  \begin{subfigure}[]
  \centering
 \includegraphics[width=7cm]{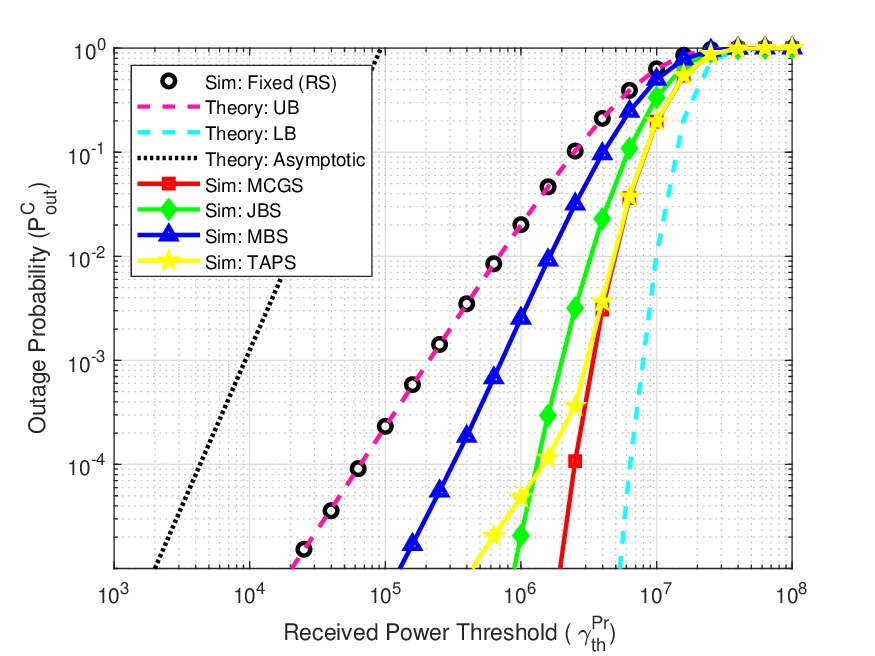}
\label{fig:out_C_theo}
\end{subfigure}
 \hfill
 \begin{subfigure}[]
 \centering
 \includegraphics[width=7cm]{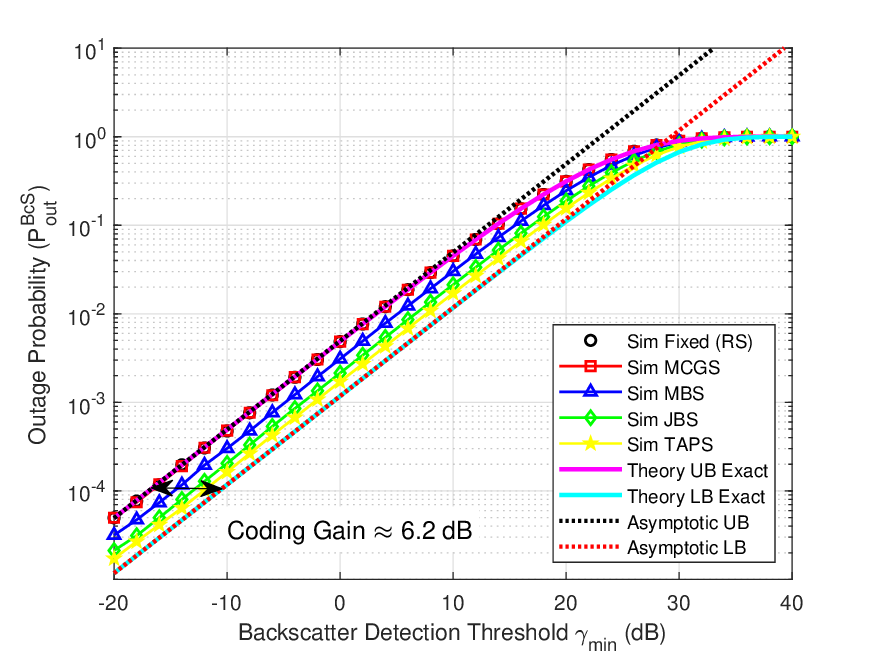}
\label{fig:out_BcS_theo}
\end{subfigure}
 \caption{Outage probability vs. the threshold showing UB, LB, asymptotic bounds, simulation curves of the symbiotic FAS strategies and RS strategy for single fixed antenna system when FAS ports $N =10$, FAS length $W=0.2\lambda$ and the transmit power $P_t=30dBm$, for (a) Composite normalized signal received power and (b) BcS signal SNR.}
 \label{fig:out_theo_corr}
 \end{figure}

\begin{figure}[htbp]
    \centering
    \begin{minipage}{0.5\textwidth}
    \centering
    \begin{subfigure}[]
        \centering
        \includegraphics[width=\textwidth]{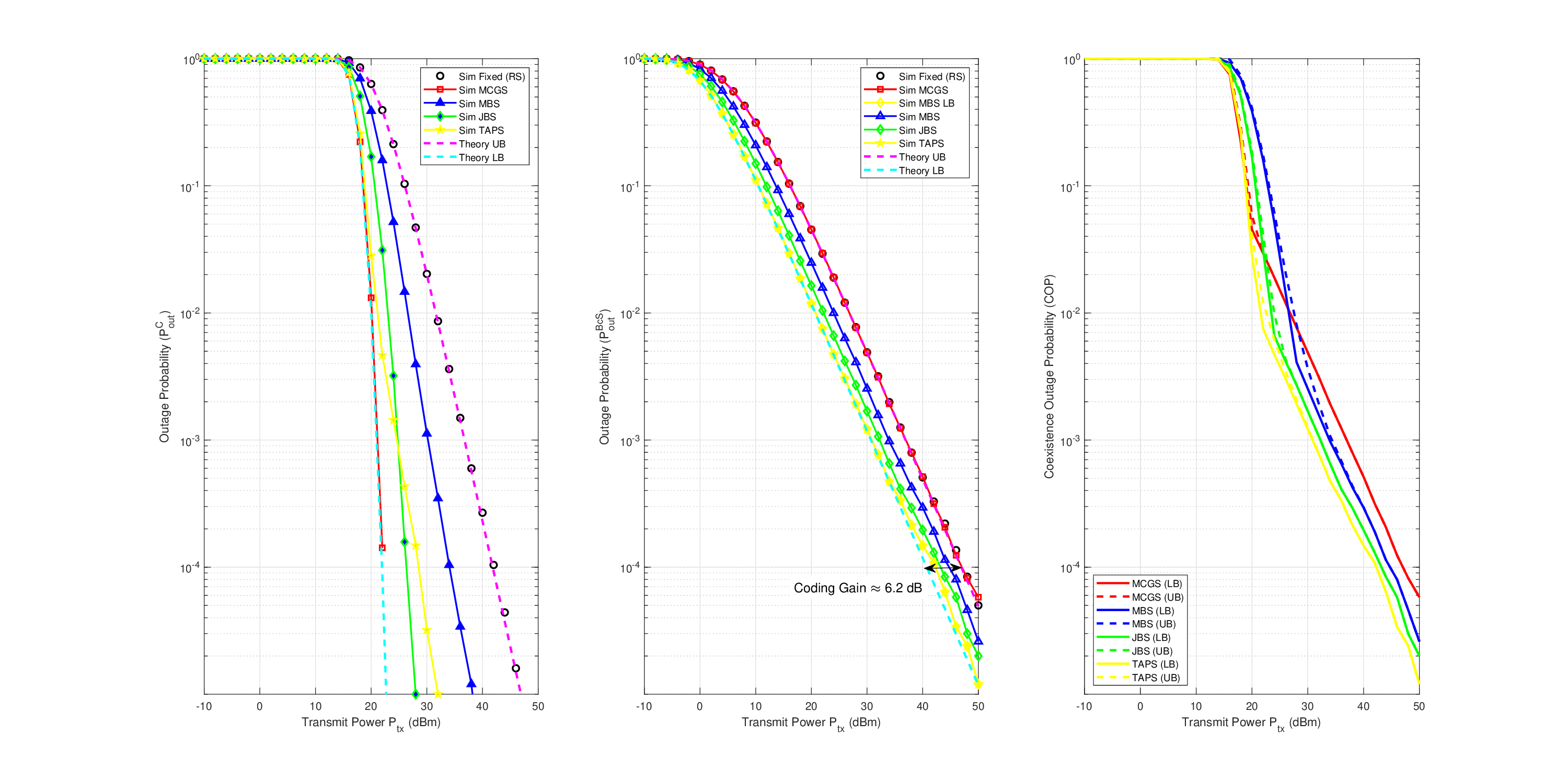}
        \label{fig:sub-first}
    \end{subfigure}
    \hfill
    \begin{subfigure}[]
        \centering
        \includegraphics[width=\textwidth]{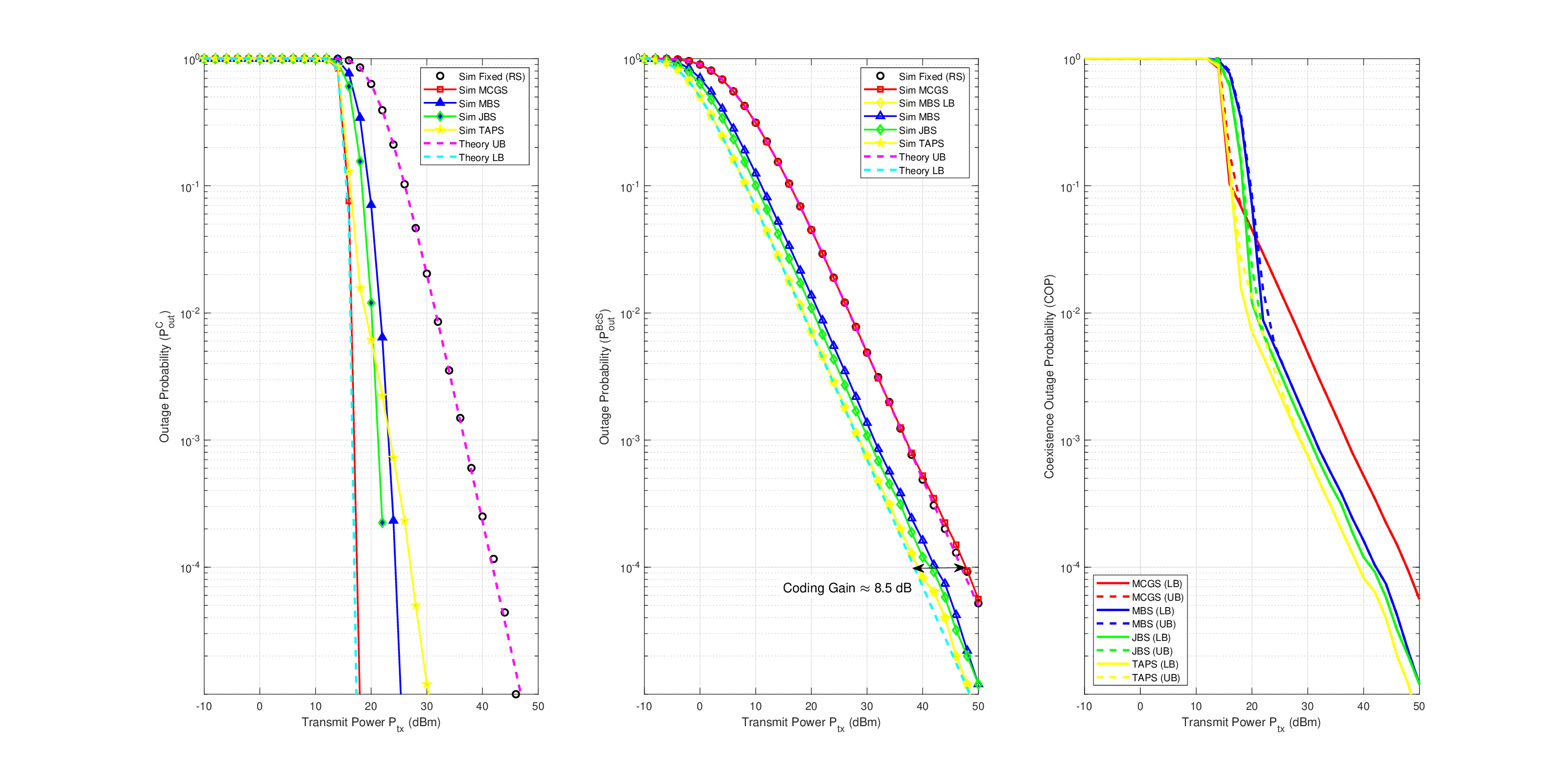}
        \label{fig:sub-second}
    \end{subfigure}
    \end{minipage}
    \caption{Outage probability vs. the transmit power in dBm of the composite signal, BcS link and COP under different symbiotic FAS port strategies for FAS length $W=2\lambda$ and FAS ports $(a)N =10$ and $(b)N=100$.}
\label{fig:out_sub_N_10}
\end{figure}

Fig.\ref{fig:out_sub_N_10} illustrate the effect of increasing the number of FAS ports from 10 to 100 on the outage probability of the composite normalized received power, the BcS signal SNR along with the LB of COP. For the composite outage, MCGS with FAS length of $W=2\lambda$ was the optimal strategy reaching to almost the theoretical LB as was proved in $P_{out, C}^{LB, MCGS}$ theoretical derivation, in contrary to Fig.\ref{fig:out_theo_corr} miniatured length FAS deterioration in performance. Increasing FAS ports has increased all the symbiotic strategies outage curves slops due to the diversity gain. Non the less, TAPS being a heuristic optimization technique rather a direct maximum port selection showed a transition in the slop, as after $10^{-3}$ both of MBS and JBS performed better. On the other hand, the BcS signal outage curves slops were identical for all; the UB, LB and for increasing ports numbers. Equalling to Nakagami m factor of the BcS backward link $m_g$, serving as the bottleneck of the symbiotic system, as was proven by the asymptotic analysis. Nevertheless, a FAS coding gain increment from 6.2dB at N=10 to 8.5dB at N=100 was demonstrated, as the curves at N=100 showed a further shift to the left side. Regarding symbiotic strategies, MCGS showed the worst performance for the BcS signal where it matches with the single antenna outage curve, while obviously both of TAPS and the simulated curve of $\text{MBS}_{LB}$ showed the best performance matching the derived theoretical LB, and JBS strategy proved a better performance than MCGS and MBS as it balances the direct link and BcS link when choosing the activated port. The normalized scale for the composite signal power threshold aided us to  visualize the symbiotic bottleneck where the energy link and data link cross as it appears as a sudden bend in the COP curve, where increasing the number of ports N has moved further to the left due to the coding gain effect on the BcS link and upward due to the diversity gain effect on the composite signal link. Moreover, the TAPS strategy showed the best COP performance which indicate the importance of balanced optimization for both of the involved links. \\

\begin{figure}[t]
 \centering
 \includegraphics[width=7cm]{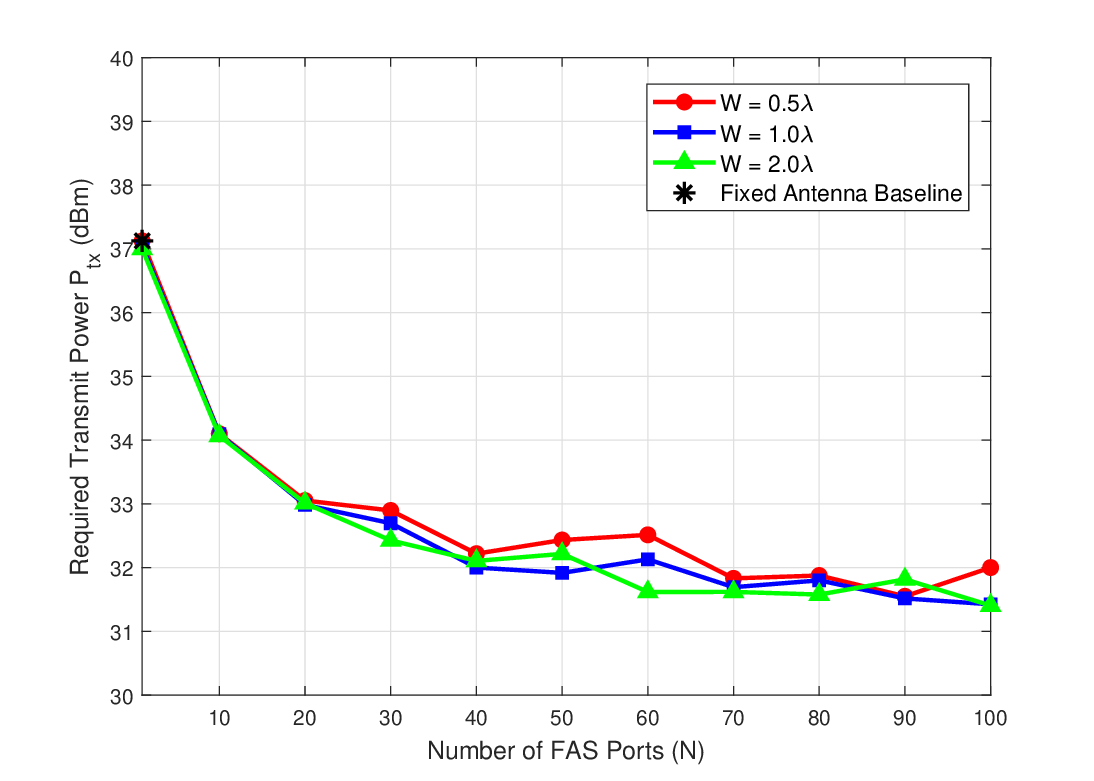}
 \caption{FAS Gain through transmit power $\text{P}_{tx}$ required for a LB bottleneck of $\text{COP} = 10^{-3}$ vs. number of FAS ports $\text{N}=(1,100)$ under MBS strategy for different FAS length $\text{W}=(0.5,1,2)\lambda$ and single fixed antenna system.}
\label{fig:fas_gain}
\end{figure}

Fig.\ref{fig:fas_gain} illustrates the required UAV transmit power $P_{tx}$ to achieve a target coexistence outage probability (COP) of $\epsilon = 10^{-3}$ versus the number of FAS ports $N$. Compared to the legacy fixed-antenna baseline ($N=1$), which demands a high power of $37.1\text{ dBm}$, shifting to FAS triggers a sharp exponential decline in the required $P_{tx}$. At $N=20$, the necessary power drops to approximately $33\text{ dBm}$ across all normalized sizes $W$, yielding a massive RF power savings of $\approx 4\text{ dB}$ that translates to $60\%$ reduction in the  transmitted power. Crucially, this power reduction directly translates to a significant extension of the battery-limited flight time for payload-constrained UAV chassis. Furthermore, the tightly bound convergence of the curves demonstrates that the high spatial resolution of the ports dominates over the raw physical dimensions of the antenna array. Even under limited spatial length of $W = 0.5\lambda$, the FAS structure captures substantial spatial diversity and coding gain, proving their capabilities in supporting SWAP limited UAVs and surpassing the bulky, heavy multi-antenna hardware. Diminishing returns emerge beyond $N=40$, establishing a clear design threshold for lightweight, energy-efficient, and aerodynamically optimized symbiotic UAVs platforms.\\

\begin{figure}[t]
 \centering
 \includegraphics[width=7cm]{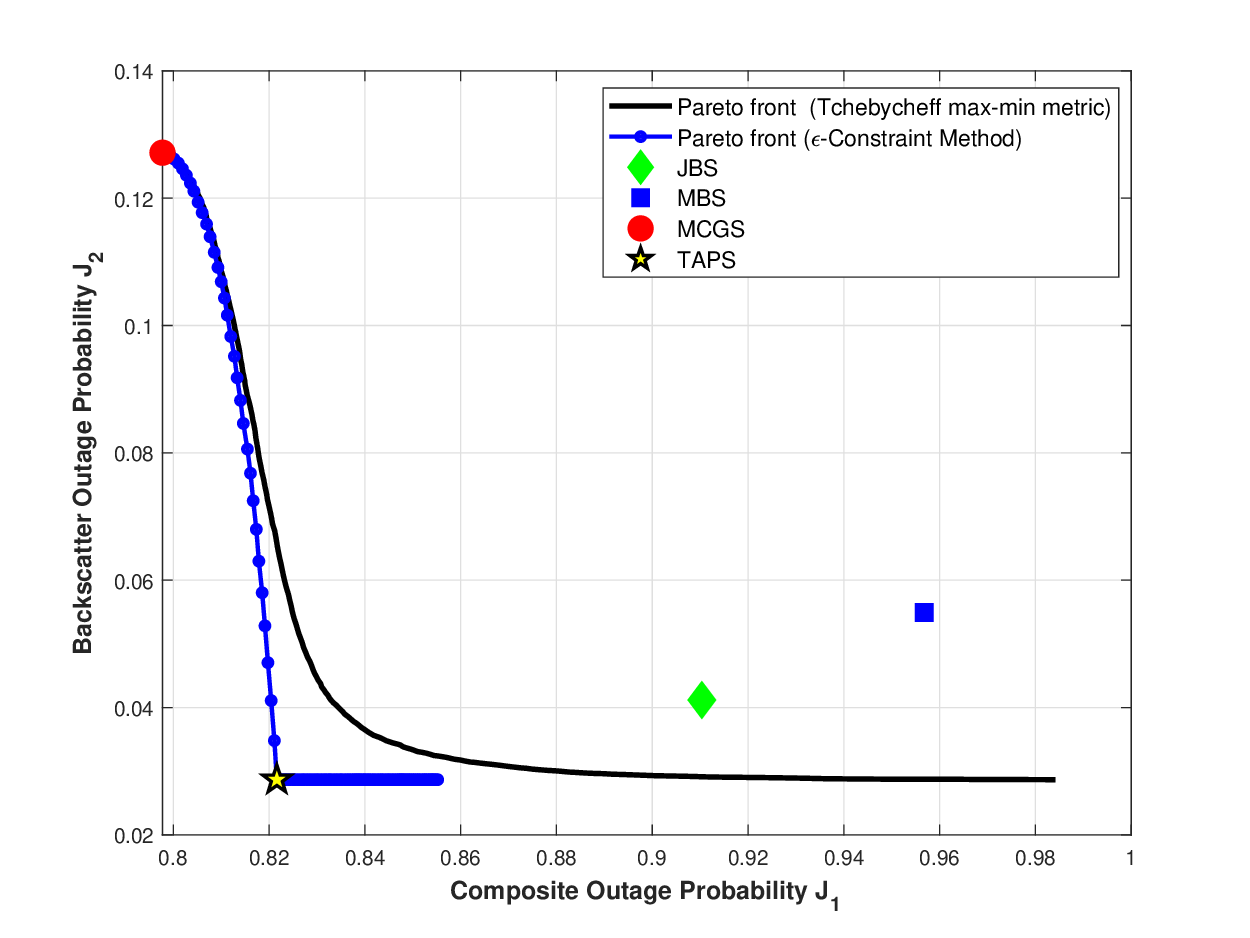}
 \caption{Pareto front boundary with two methods (Tchebycheff max-min and $\epsilon$ constraint) along with the operating points of the novel proposed symbiotic FAS schemes when outage thresholds were ( $\bar{\gamma}^{Pr}_{th}=75dB$, $\bar{\gamma}^{Dd}_{th}=15dB$), FAS length and ports were $W=1\lambda$ and $N=30$, respectively.}
\label{fig:Pareto_front}
\end{figure}

Fig.\ref{fig:Pareto_front} highlights the operating points of the novel symbiotic FAS strategies, specifically the tightness of our low-complexity single step TAPS scheme against the $\epsilon$-constraint frontier boundary. The tightness of TAPS toward the boundary knee demonstrates optimal symbiotic reliability performance compromise  without requiring the computation complexity of multi-objective solvers. Moreover, the Pareto boundary highlights the inescapable physical trade-off between the primary EH link and the secondary BcS link, demonstrated through the position of MCGS on the rightmost anchor of the front prioritizing the composite signal reliability compared with MBS position at the center-left prioritizing the BcS reliability, while JBS showed a better performance compromise although not ideal as TAPS. \\

Fig.\ref{fig:FAS_feasability} illustrates the maximum permissible operational UAV velocity ($v_{\text{max}}$) as a function of FAS port dimension ($N$) under a control overhead budget of $\eta_{\text{max}} = 10\%$. As analytically derived in our asymptotic complexity analysis, the maximum velocity curves scale according to $\mathcal{O}(1/N)$ for both evaluated carrier frequencies. Nevertheless, when operating at a lower carrier frequency ($f_c = 2.4\text{ GHz}$), the relaxed Doppler constraints allow the UAV to maintain significant physical mobility (up to $v_{\text{max}} \approx 40.5\text{ m/s}$ at $N = 30$ ports). Conversely, when moving to the $5.8\text{ GHz}$ band, the heightened Doppler sensitivity contracts the available coherence window, compressing the mobility to less than the half. This visualization serves as a crucial design guideline, proving that the low-complexity, non-iterative nature of the proposed symbiotic FAS schemes guarantee physical real-time feasibility under realistic drone speeds and dense port deployments.\\

Fig.\ref{fig:2d_uav_placement} presents the spatial heat map with a contour plot of the LB COP under the MBS strategy across the operational grid of ($x=-10:80$) and ($y=-20:50$). The optimal UAV position of $(X_U = 14.0, Y_U = 16.0)$ with the minimum achievable COP of $P_{out,coex}=1.55e^{-03}$ demonstrates that the global minimum outage is aggressively skewed toward the IoT tag, hence the distance dependent path loss of the BcS forward link $\text{d}_{\text{f}}$ is the primary bottleneck. By optimizing the UAVs placement, we autonomously overcome this tag deficit ensuring the total symbiotic network performance.

\begin{figure}[t]
\centering
 \includegraphics[width=7cm]{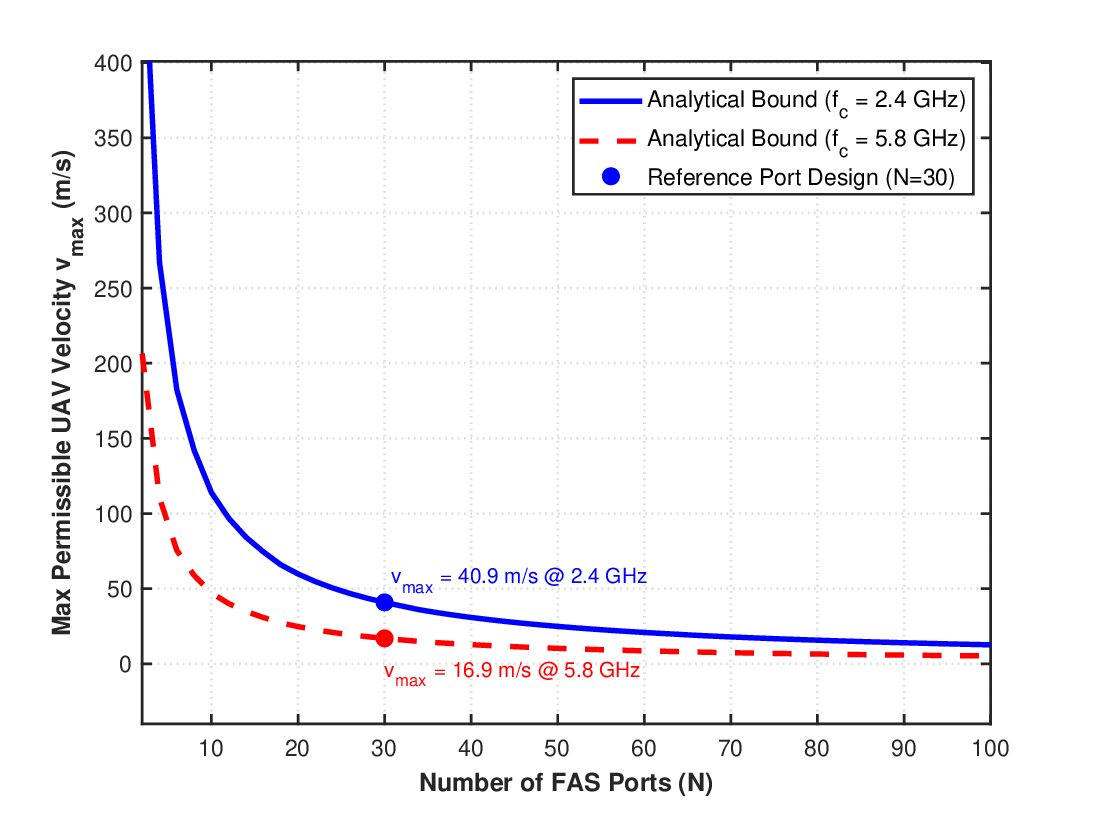}\vspace{-0.2cm}
 \caption{Mobility operational envelopes with maximum protocol overhead fraction of ($\eta_{max} = 10\%$) vs. number of FAS ports and 
 under different operating frequencies. }
\label{fig:FAS_feasability}
\end{figure}

 \begin{figure}[t]
\centering
 \includegraphics[width=7cm]{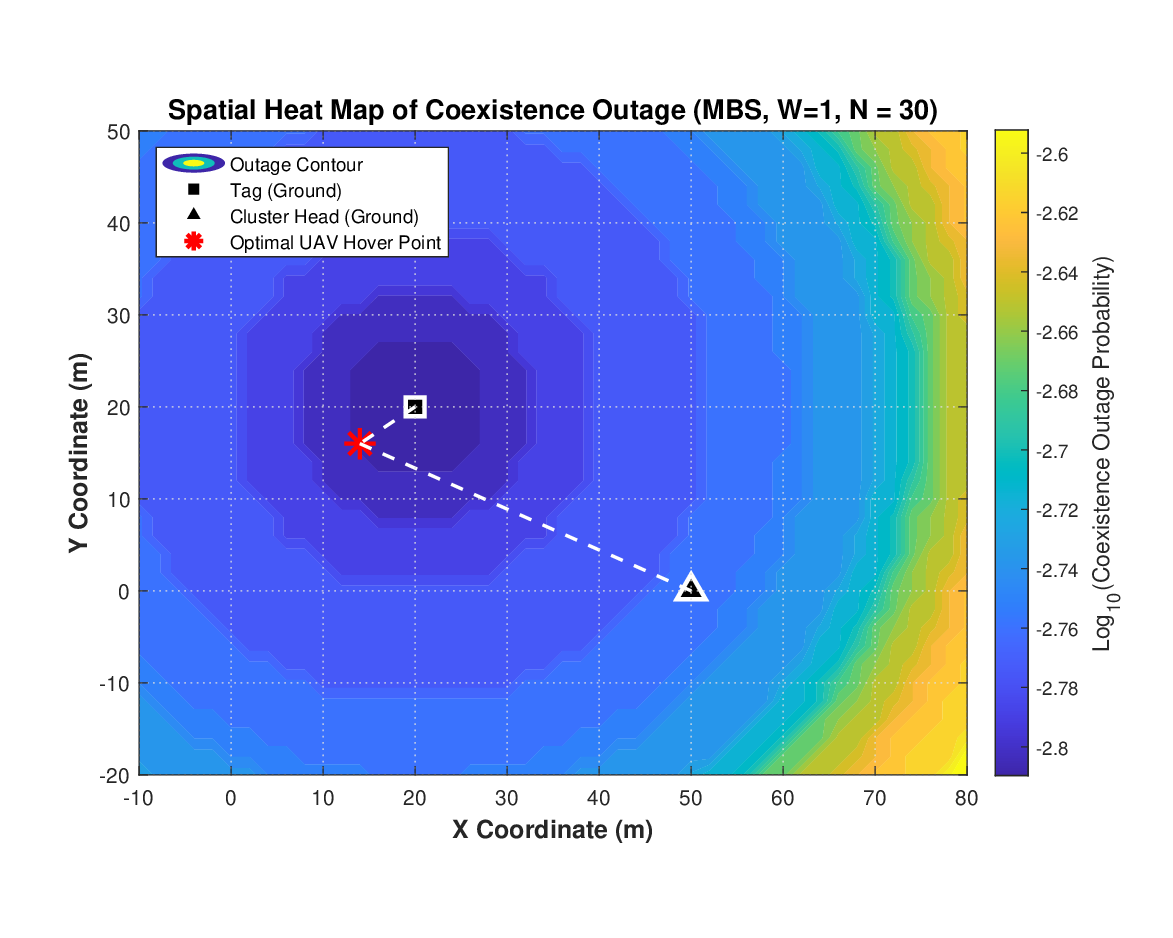}\vspace{-0.2cm}
 \caption{Spatial heat map of COP showing the optimal horizontal UAV placement with fixed altitude of $100m$ for MBS strategy, when FAS length $ W=1$ and ports $N = 30$.}
\label{fig:2d_uav_placement}
\end{figure}

\section{Conclusion}
This paper introduced novel FAS port selection strategies for UAVs assisted symbiotic 6G networks, focusing on optimizing the energy harvesting and BcS signals reliability by achieving the optimal Pareto boundary performance without the classical multi-objective solver exponential complexity. The derived analytical and asymptotic bounds, and the Monte-Carlo simulations proved that these strategies significantly improve the outage probability, diversity and coding gain. The feasibility under different operating frequencies and UAVs velocities was demonstrated and the optimal UAVs placement surpassed the bottleneck link and further enhanced the system reliability. Posing symbiotic FAS mounted UAVs as a viable energy efficient scheme for green 6G networks. 
\appendices
\section{UB of the composite signal}
\label{app:UB_C}
Composite signal phase offset doesn't depend on the selected port, then conditioning on fixed amplitudes $u$ and $v$:
\begin{equation}
\begin{split}
F_{Z|u,v}(\gamma_{th}) = \\ \text{P}\left( \bar{\gamma}_D u^2 +\bar{\gamma}_{BcS} v^2 + 2\sqrt{\bar{\gamma}_D \bar{\gamma}_{BcS}} uv \cos(\Delta\phi) < \gamma_{th} \right).
\end{split}
\end{equation}
Solving for $\cos(\Delta\phi)$:
\begin{equation}
\cos(\Delta\phi) < \beta, \quad \text{where } \beta = \frac{\gamma_{th} -\bar{\gamma}_D u^2 -\bar{\gamma}_{BcS} v^2}{2\sqrt{\bar{\gamma}_D \bar{\gamma}_{BcS}} uv},
\label{eqn:beta}
\end{equation}
and since $\Delta\phi$ is uniform, the probability is determined by the angle:\\
\begin{equation}
F_{Z|u,v}(\gamma_{th}) = \begin{cases}
1 & \text{if } \beta \ge 1  \\
0 & \text{if } \beta \le -1  \\
1 - \frac{\arccos{(\beta)}}{\pi} & \text{if} -1 < \beta < 1.
\end{cases}
\label{eqn:F_Z}
\end{equation}
Integrating (\ref{eqn:F_Z}) over $f_{UV}= f_U f_V$, which is the joint pdf of the independent direct and BcS channels given in (\ref{eqn:f_hD}) and (\ref{eqn:f_hBcS}), respectively, ends the proof.
\section{COP}
\label{app:cop}
Operational coexistence requires the simultaneous satisfaction of the EH and BcS data retrieval constraints, as: 
\begin{equation}
P_{out, coex} = \text{Pr}\Big( S_{C,k^*} < \gamma^{Pr}_{th} \ \cup \ S_{BcS,k^*} < \gamma^{Dd}_{th} \Big), 
\label{eqn:union}
\end{equation}
where $k^*$ represents the activated port under a given FAS selection strategy.
Thus, invoking the Fréchet-Hoeffding probability inequalities to the union in (\ref{eqn:union}) \cite{Hoppe2008}, ends the proof.

\bibliographystyle{IEEEtran}
\bibliography{FAS_UAV_SR} 
\end{document}